\title{Scott Continuity in Generalized Probabilistic Theories}
\author{Robert Furber
\institute{Aalborg University, Denmark}
\email{furber@cs.aau.dk}
}
\newif\ifignore
\newcommand{\catl}[1]{\ensuremath{\mathcal{#1}}}
\newcommand{\Ban}{\cat{Ban}}
\newcommand{\EffS}{\Eff_{\pm}}
\newcommand{\QED}{\hspace*{\fill}\QEDbox}
\newcommand{\Banone}{\Ban_1}
\newcommand{\absco}{\ensuremath{\mathrm{absco}}}
\newcommand{\Ball}{\ensuremath{\mathrm{Ball}}}
\newcommand{\BNS}{\cat{BNS}}
\newcommand{\cat}[1]{\ensuremath{\mathbf{#1}}}
\newcommand{\DM}{\ensuremath{\mathcal{DM}}}
\newcommand{\blank}{\ensuremath{{\mbox{-}}}}
\newcommand{\op}{\ensuremath{^{\mathrm{op}}}}
\newcommand{\Conv}{\cat{Conv}}
\newcommand{\R}{\ensuremath{\mathbb{R}}}
\newcommand{\co}{\ensuremath{\mathrm{co}}}
\newcommand{\D}{\ensuremath{\mathcal{D}}}
\newcommand{\C}{\ensuremath{\mathbb{C}}}
\newcommand{\N}{\ensuremath{\mathbb{N}}}
\newcommand{\Rgeq}{\ensuremath{\mathbb{R}_{\geq 0}}}
\newcommand{\OUS}{\cat{OUS}}
\newcommand{\Eff}{\mathcal{E}}
\newcommand{\NS}{\mathcal{NS}}
\newcommand{\Hil}{\ensuremath{\mathcal{H}}}
\newcommand{\powerset}{\ensuremath{\mathcal{P}}}
\newcommand{\ev}{\mathrm{ev}}
\newcommand{\ceil}[1]{\lceil #1 \rceil}
\newcommand{\EA}{\cat{EA}}
\newcommand{\Stat}{\ensuremath{\mathcal{S}}}
\newcommand{\EM}{\ensuremath{\mathcal{E\!M}}}
\newcommand{\QEDbox}{\ensuremath{\square}}
\newcommand{\isoarrow}{\xrightarrow{\sim}}
\newtheorem{theorem}{Theorem}[section]
\newtheorem{proposition}[theorem]{Proposition}
\newtheorem{counterexample}[theorem]{Counterexample}
\newenvironment{myproof}[1][\textnormal{\emph{Proof.}}]%
    { \begin{trivlist}%
        \item[\hskip \labelsep {\bfseries #1}]%
    }%
    { \end{trivlist}%
    }
\newcommand{\BN}{\mathcal{BN}}
\newcommand{\OU}{\mathcal{OU}}
\newcommand{\lev}{\mathrm{lev}}
\newcommand{\sdcEMod}{\cat{sdcEMod}}
\begin{document}
\maketitle

\begin{abstract}
Scott continuity is a concept from domain theory that had an unexpected previous life in the theory of von Neumann algebras. Scott-continuous states are known as normal states, and normal states are exactly the states coming from density matrices. Given this, and the usefulness of Scott continuity in domain theory, it is natural to ask whether this carries over to generalized probabilistic theories. We show that the answer is no - there are infinite-dimensional convex sets for which the set of Scott-continuous states on the corresponding set of 2-valued POVMs does not recover the original convex set, but is strictly larger. This shows the necessity of the use of topologies for state-effect duality in the general case, rather than purely order theoretic notions.
\end{abstract}

\section{Introduction}
In probability theory, the space of probability distributions forms a convex set, and in quantum theory, the space of mixed states forms a convex set. One of the main ideas of generalized probabilistic theories \cite{hardy2007, barrett07, barnum2010} is to consider a convex set $X$ as forming the space of states of a physical system. Given two states $x,y \in X$ and a real number $\alpha \in [0,1]$, we interpret the convex combination $\alpha x + (1-\alpha) y$ as the mixed state we get if we prepare $x$ with probability $\alpha$ and $y$ with probability $(1-\alpha)$. We can use this idea to define a state space abstractly in terms of such an operation, and many people have done this in slightly different ways, such as \cite{Stone1949,Neumann1970,swirszcz75}. In this introduction, we start by using $\D$-algebras as our definition of convex sets, \emph{i.e.} a convex set will be taken to be a pair $(X,\alpha_X)$ where $\alpha_X \colon \D(X) \rightarrow X$ is a function from the set of finite probability distributions $\D(X)$ to $X$, subject to some axioms (see \cite[\S 2]{furber2019} for this and other definitions). This is more general than the approach of \cite{Stone1949} but less general than \cite[Theorem 2.1]{Gudder1973} and \cite[Theorem 1]{ozawa80}. We write $\Conv = \EM(\D)$, where $\EM$ refers to the category of Eilenberg-Moore algebras of a monad (for this specifically, and other categorical notions, we refer the reader to \cite[VI.2, VI.8]{maclane}). We also mention at this point that if $\catl{C}$ is a category, and $X,Y \in \catl{C}$ are objects, then $\catl{C}(X,Y)$ refers to the set of morphisms $X \rightarrow Y$. Likewise, $\catl{C}\op$ is the category $\catl{C}$ with its arrows reversed, so $\catl{C}\op(Y,X) = \catl{C}(X,Y)$. Specific categories of structures and their homomorphisms are usually written in bold, like $\Conv$.

However we formalize abstract convex sets, 2-valued measurements are represented as affine functions to $\D(2)$, and $\D(2)$ is isomorphic to $[0,1]$, and this is in fact the more usual definition. For $X$ a convex set, we write $\Eff(X)$ for the set of affine functions $X \rightarrow [0,1]$, which is called the \emph{effects}. The effects form a structure called an \emph{effect algebra}, and $[0,1]$ is also an effect algebra, and the set of effect algebra morphisms $A \rightarrow [0,1]$, for an effect algebra $A$, forms the convex set of \emph{states}, $\Stat(A)$. We write $\EA$ for the category of effect algebras. In \cite[Theorem 17]{Jacobs10e} Bart Jacobs formulated the duality between states and effects in terms of a dual adjunction, where for any convex set $X$ and effect algebra $B$, $\EA(B, \Eff(X)) = \EA\op(\Eff(X),B) = \Conv(X,\Stat(B))$. This isomorphism arises from the evaluation morphisms $\eta_X \colon X \rightarrow \Stat(\Eff(X))$ and $\epsilon_B \colon B \rightarrow \Eff(\Stat(B))$, which are the unit and counit of the above adjunction (see \cite[IV.1 Theorems 1 and 2]{maclane} for how different definitions of adjunction relate to each other). In \cite[Theorem 4.4]{furber2019} we showed that $\eta_X$ was an isomorphism iff $X$ was isomorphic to the base of a reflexive base-norm space. Since this was not general enough to include the convex set of density matrices $\DM(\Hil)$ for $\Hil$ an infinite-dimensional Hilbert space, we altered the definition of $\Eff(X)$ to include a compact, locally convex topology, and then this was enough to make it so that $\eta_X$ was an isomorphism iff $X$ was isomorphic to the base of a Banach base-norm space \cite[Theorem 5.1]{furber2019}, which includes any closed bounded subset of a Banach space \cite[Proposition 2.4.13]{furberthesis}, such as $\DM(\Hil)$. 

There is another approach to including $\DM(\Hil)$ in such a duality that comes from \cite[Theorem 4.1]{rennelaMSc}. This is based on \emph{Scott continuity}. In domain theory \cite{abramsky95,plotkinpisa}, order-preserving maps between partially ordered sets that also preserve the least upper bounds of directed sets are known as \emph{Scott-continuous} functions. In the context of computer science, the purpose of Scott-continuous maps is to describe those functions that preserve approximation by ``less defined'' elements, which is something that computable functions must do. 

Scott-continuous maps also occur in the theory of von Neumann algebras and W$^*$-algebras. If $\Hil$ is an infinite-dimensional Hilbert space, there are states on $B(\Hil)$ that do not come from density matrices. This is unfortunate, because no physical interpretation is known for such states, seeing as they are not related back to $\Hil$. However, Scott-continuous states, which are known as the \emph{normal states}, do not have this problem, \emph{i.e.} all normal states on $B(\Hil)$ arise from density matrices. 

This notion generalizes to von Neumann algebras and the abstract version of von Neumann algebras, W$^*$-algebras\footnote{Although some authors use von Neumann algebra and W$^*$-algebra synonymously, we follow \cite{Sakai71,takesaki} in using von Neumann algebra for a concrete *-subalgebra of $B(\Hil)$ that is weakly closed (equivalently, its own bicommutant) and W$^*$-algebra to refer to a C$^*$-algebra that is isomorphic to a von Neumann algebra.}. For example, the space of bounded $\C$-valued sequences $\ell^\infty$, made into a C$^*$-algebra with pointwise operations from $\C$, is a W$^*$-algebra. The states of $\ell^\infty$ correspond to finitely-additive probability measures on $(\N,\powerset(\N))$, whereas the normal states are those states that come from a sequence of nonnegative reals $(\phi_i)_{i \in \N}$ such that $\sum_{i=1}^\infty \phi_i = 1$ under the definition $(a_i)_{i \in \N} \mapsto \sum_{i=1}^\infty a_i\phi_i$. 

In fact there is a characterization of W$^*$-algebras as C$^*$-algebras that are bounded directed-complete and are separated by their normal (\emph{i.e.} Scott-continuous) states, due to Kadison \cite[Definition 1]{kadison56}. However, it appears that the more commonly used characterization of having a predual is more easily verified in practice, so has come to predominate. 

In the approach of \cite[Appendix C]{rennelaMSc}\footnote{We slightly adapt this, using states instead of substates and effect modules instead of generalized effect modules. The approach using substates and subunital morphisms is important in practice because of the need for a quantum analogue of partial functions, but we have opted to simplify the mathematics a little in this article.}, for a convex set, the effect algebra $\Eff(X)$ is directed complete, \emph{i.e.} each directed set has a least upper bound. For directed-complete convex effect algebras $A$, we can consider $\NS(A)$ to be the subset of $\Stat(A)$ of Scott-continuous, or normal, states. Then $\Eff(X)$ also has the property, like a W$^*$-algebra, that it is separated by its normal states, \emph{i.e.} for each $a,b \in \Eff(X)$, such that $a \neq b$, there is a $\phi \in \NS(\Eff(X))$ such that $\phi(a) \neq \phi(b)$. We take the category $\sdcEMod$ to consist of effect modules\footnote{\emph{A.k.a.} convex effect algebras \cite{PulmannovaG98} \cite[\S 3]{JacobsM12b}.} $A$ that are directed-complete (hence the $\cat{dc}$) and separated by their normal states (hence the $\cat{s}$) and with morphisms that are Scott-continuous effect module homomorphisms. Then $\Eff \colon \Conv \rightarrow \sdcEMod\op$, and $\NS \colon \sdcEMod\op \rightarrow \Conv$, and $\eta_X \colon X \rightarrow \NS(\Eff(X))$ is an isomorphism if $X$ is the normal state space of a W$^*$-algebra, and $\epsilon_A \colon A \rightarrow \NS(\Eff(A))$ is an isomorphism if $A = [0,1]_B$ for $B$ a W$^*$-algebra. 

It is natural to ask if this can give an alternative formulation of \cite[Theorem 5.1]{furber2019} without topology. Concepts in domain theory, such as continuous lattices, can be formulated topologically \cite{scott72,ershov1973}, but the order-theoretic formulations are often preferred. However, we show that $\eta_X$ is not necessarily an isomorphism even if $X$ is the base of a Banach base-norm space, and $\epsilon_A$ is not necessarily an isomorphism even if $A$ is the unit interval of a separated directed-complete Banach order-unit space. 

Although our motivation comes from abstract convex sets and effect algebras, we will be using base-norm spaces and order-unit spaces throughout so that we can take advantage of the power of Banach space theory in the proofs. Therefore we give a short explanation, with pictures, of how a convex set $X$ gives rise to a base-norm space, and how the effect algebra $\Eff(X)$ sits in the dual order-unit space. 

For every closed bounded convex subset $X$ of a normed vector space, there is ``free base-norm space'' containing $X$ as its base. If we take $X$ to be the unit disc in $\R^2$, then the base-norm space we obtain is depicted in Figure \ref{BaseDiscFig}. A base-norm space is a triple $(E,E_+,\tau)$, where $E$ is a real vector space, $E_+ \subseteq E$ a cone such that $E = E_+ - E_+$ and $\tau \colon E \rightarrow \R$ a positive linear map, subject to some axioms that we will describe later. The set of $x \in E_+$ such that $\tau(x) = 1$ is called the \emph{base}, notated $B(E)$. In this case $E = \R^3$, the base is the circle drawn with a black line, and we have used four diagonal lines to indicate roughly where $E_+$ is, as it consists of nonnegative multiples of $X$. Every base-norm space has an intrinsic norm, and the unit ball of the norm is indicated using dotted lines. It is the convex hull of $B(E) \cup - B(E)$. The trace is simply the $y$-coordinate, and we will adhere to this convention in all the following figures.

\begin{figure}
\begin{minipage}{0.5\textwidth}
\caption{A base norm space}
\label{BaseDiscFig}
\begin{center}
\begin{tikzpicture}
\draw[lightgray] (0,-2,0) -- (0,2.5,0);
\draw[lightgray] (-2,0,0) -- (2,0,0);
\draw[lightgray] (0,0,-3) -- (0,0,3);
\begin{scope}[canvas is zx plane at y=1]
\draw (0,0) circle[radius=0.8cm];
\end{scope}
\draw (0,0,0) -- (1.6,2,0);
\draw (0,0,0) -- (-1.6,2,0);
\draw (0,0,0) -- (0,3,2.4);
\draw (0,0,0) -- (0,2,-1.6);
\begin{scope}[canvas is zx plane at y=-1]
\draw[dotted] (0,0) circle[radius=0.8cm];
\end{scope}
\draw[dotted] (-0.8,-1,0) -- (-0.8,1,0);
\draw[dotted] (0.8,-1,0) -- (0.8,1,0);
\end{tikzpicture}
\end{center}
\end{minipage}
\begin{minipage}{0.5\textwidth}
\caption{$\ell^1(3)$}
\label{EllOneThreeFig}
\begin{center}
\begin{tikzpicture}
\draw[lightgray] (0,-2,0) -- (0,2.5,0);
\draw[lightgray] (-2,0,0) -- (2,0,0);
\draw[lightgray] (0,0,-3) -- (0,0,3);
\draw (-1,1.1,0.58) -- (1,1.1,0.58) -- (0,1.1,-1.16) -- (-1,1.1,0.58);
\draw (0,0,0) -- (2.72,3,1.58);
\draw (0,0,0) -- (-1.82,2,1.05);
\draw (0,0,0) -- (0,2,-2.11);
\draw[dotted] (1,-1.1,-0.58) -- (-1,-1.1,-0.58) -- (0,-1.1,1.16) -- (1,-1.1,-0.58);
\draw[dotted] (-1,1.1,0.58) -- (-1,-1.1,-0.58);
\draw[dotted] (-1,1.1,0.58) -- (0,-1.1,1.16);
\draw[dotted] (1,1.1,0.58) -- (0,-1.1,1.16);
\draw[dotted] (1,1.1,0.58) -- (1,-1.1,-0.58);
\draw[dotted] (0,1.1,-1.16) -- (-1,-1.1,-0.58);
\draw[dotted] (0,1.1,-1.16) -- (1,-1.1,-0.58);
\end{tikzpicture}
\end{center}
\end{minipage}
\end{figure}

If instead of a disc, we take $X$ to be a triangle in $\R^2$, we get Figure \ref{EllOneThreeFig}. There is a clear difference with the unit ball of the intrinsic norm -- it is not a triangular prism because the triangle $X$ and its negation are different in $\R^2$. This is because $X$ is not absolutely convex. As we have indicated in the caption, Figure 2 is actually a well-known space, $\R^3$ equipped with the pointwise ordering and the trace $\tau(x,y,z) = x + y + z$. This space is known as $\ell^1(3)$. 

Of course, normally when drawing $\ell^1(3)$, we would take the three dark lines to be the coordinate axes, but the way it is depicted in Figure \ref{EllOneThreeFig} adheres to the previous convention that the trace should be the $y$ coordinate, and so is easier to compare to Figure \ref{BaseDiscFig}. 

The dual space\footnote{The space of continuous linear maps to $\R$.} $E^*$ of a base-norm space $E$ is an order-unit space. An \emph{order-unit space} is a triple $(A,A_+,u)$ where $A$ is a real-vector space, $A_+$ a positive cone such that $A = A_+ - A_+$ and $u \in A_+$ an element subject to some axioms that make the order-interval $[-u,u]$ the unit ball of an intrinsically-defined norm on $A$. The important thing is that the dual Banach space of a base-norm space $E$, \emph{i.e.} the space $E^*$ of bounded linear maps $E \rightarrow \R$ is an order-unit space and vice versa. The unit interval $[0,1]_A$, defined to be $[0,u]$ in the order defined by $A_+$, is a convex effect algebra\footnote{\emph{A.k.a.} effect module.}, and $[0,1]_{E^*}$ is exactly $\Eff(B(E))$, the effects of $B(E)$. 

We depict the order unit space $(A,A_+,u)$ dual to the base-norm space from Figure \ref{BaseDiscFig} in Figure \ref{OUDiscFig}. The cone involved is self-dual, so looks the same again. We have outlined both $A_+$ and $u-A_+$, the set of elements $\leq u$ with diagonal lines. We see that the boundary of $u - A_+$ meets the boundary of $A_+$ in a circle, which we have drawn as a dotted line. The unit ball, $[-1,1]_A$, is just $2[0,1]_A - 1$, so we do not bother to draw it.
\begin{figure}
\begin{minipage}{0.5\textwidth}
\caption{The dual of Figure \ref{BaseDiscFig}}
\label{OUDiscFig}
\begin{center}
\begin{tikzpicture}
\draw[lightgray] (0,-2,0) -- (0,2.5,0);
\draw[lightgray] (-2,0,0) -- (2,0,0);
\draw[lightgray] (0,0,-3) -- (0,0,3);
\begin{scope}[canvas is zx plane at y=1]
\draw[densely dotted] (0,0) circle[radius=0.8cm];
\end{scope}
\draw (0,0,0) -- (2,2.5,0);
\draw (0,0,0) -- (-2,2.5,0);
%\draw (0,0,0) -- (0,3,2.4);
%\draw (0,0,0) -- (0,3,-2.4);
\draw (0,2,0) -- (-2,-0.5,0);
\draw (0,2,0) -- (2,-0.5,0);
%\draw (0,2,0) -- (0,-1.75,-3);
%\draw (0,2,0) -- (0,-1.75,3);
% this part is the unit ball, don't need it in the paper
%\begin{scope}[canvas is zx plane at y=0]
%\draw[dotted] (0,0) circle[radius=1.6cm];
%\end{scope}
%\draw[dotted] (0,-2,0) -- (-1.6,0,0);
%\draw[dotted] (0,-2,0) -- (1.6,0,0);
\end{tikzpicture}
\end{center}
\end{minipage}
\begin{minipage}{0.5\textwidth}
\caption{$\ell^\infty(3)$}
\label{EllInftyThreeFig}
\begin{center}
\begin{tikzpicture}
% coordinate axes
\draw[lightgray] (0,-1,0) -- (0,3.5,0);
\draw[lightgray] (-2,0,0) -- (2,0,0);
\draw[lightgray] (0,0,-3) -- (0,0,3);
% lines to the dimension 1 projections
\draw (0,0,0) -- (1.22,1,-0.71);
\draw (0,0,0) -- (-1.22,1,-0.71);
\draw (0,0,0) -- (0,1,1.41);
% lines representing the positive cone
%\draw (0,0,0) -- (3,2.46,-1.75);
%\draw (0,0,0) -- (-3,2.46,-1.75);
%\draw (0,0,0) -- (0,2.13,3);
% a triangle for the base
%\draw[dotted] (1.22,1,-0.71) -- (-1.22,1,-0.71) -- (0,1,1.41) -- (1.22,1,-0.71);
% lines from the dimension 1 projections to the dimension 2 ones
\draw[densely dotted] (0,1,1.41) -- (1.22,2,0.71);
\draw[densely dotted] (0,1,1.41) -- (-1.22,2,0.71);

\draw[densely dotted] (1.22,1,-0.71) -- (1.22,2,0.71);
\draw[densely dotted] (1.22,1,-0.71) -- (0,2,-1.41);

\draw[densely dotted] (-1.22,1,-0.71) -- (-1.22,2,0.71);
\draw[densely dotted] (-1.22,1,-0.71) -- (0,2,-1.41);
% triangle for the base of the upside-down cone on the unit
%\draw[dotted] (0,1.16,-0.82) -- (0.71,1.16,0.41) -- (-0.71,1.16,0.41) -- (0,1.16,-0.82);
% lines to the unit from the trace 2 elements, representing the upside-down cone
\draw (0,2,-1.41) -- (0,3,0);
\draw (1.22,2,0.71) -- (0,3,0);
\draw (-1.22,2,0.71) -- (0,3,0);
% outline of the upside-down cone
%\draw (0,3,0) -- (0,0.87,-3);
%\draw (0,3,0) -- (3,0.54,1.75);
%\draw (0,3,0) -- (-3,0.54,1.75);
\end{tikzpicture}
\end{center}
\end{minipage}
\end{figure}

For the dual of $\ell^1(3)$, which is $\ell^\infty(3)$, we do not draw the lines outlining the positive cone and the down set of $1$ because then the diagram, Figure \ref{EllInftyThreeFig}, would then be too cluttered. Similarly to Figure \ref{OUDiscFig}, we use dotted lines to outline where the boundaries of the positive cone and the down set of $1$ intersect. This time it is a zig-zag that does not lie in one plane. As in the case of $\ell^1(3)$, we would more usually draw the cube $[0,1]_{\ell^\infty(3)}$ as a cube in the positive orthant of $\R^3$, with the unit being $(1,1,1)$, but we have drawn it with the unit on the vertical axis so as to make the connection to Figure \ref{OUDiscFig} apparent.

In the article, we give an explicit description of the free base-norm space on the unit ball of a normed space $E$ in terms of $E$ and its norm. We call this space $\BN(E)$. We then show how to construct an order-unit space out of a normed space $E$, which we call $\OU(E)$, and show that $\BN(E)^* \cong \OU(E^*)$ and $\OU(E) \cong \BN(E^*)$. The underlying order-unit spaces of the Jordan algebras known as \emph{spin factors}\footnote{The finite dimensional cases being called $\mathfrak{S}_N$ in \cite{jordan}, the infinite-dimensional case originating the work of Topping\cite{topping1966}, for a textbook treatment, see \cite[Chapter 6]{olsen}.} are $\OU(\Hil)$ for $\Hil$ a real Hilbert space. In particular, if $\Hil$ is 3-dimensional, $\OU(\Hil)$ is isomorphic to the order-unit space of self-adjoint elements of the C$^*$-algebra of $2\times 2$ matrices, by the usual construction of the Bloch sphere. State spaces that are expressible as the unit balls of normed spaces have come up naturally when violating Tsirelson's bound, for example the ``square bit'' from boxworld \cite{gross2010} is the unit ball of $\ell^\infty(2)$, and the unit balls of $\ell^p(2)$ have also been considered as state spaces in \cite[Figure 1]{versteeg09} as for theories interpolating between ordinary quantum theory and boxworld. 

It follows from facts that are true for all base-norm spaces that the effect algebra of $\BN(E)^*$ is directed complete, and that $\BN(E)^*$ is directed complete. We then show that any bounded directed set in $\BN(E)^*$ converges in \emph{norm} to its supremum\footnote{In any infinite-dimensional W$^*$-algebra, there is a monotone sequence of projections that does not converge in norm to its supremum, so this makes spaces of the form $\BN(E)^*$ very different from W$^*$-algebras.}. Therefore every element of $\BN(E)^{**}$ is Scott-continuous. It then follows from the structure of the isomorphism $\BN(E)^{**} \cong \BN(E^{**})$, that for any \emph{irreflexive} Banach space, \emph{i.e.} a Banach space such that the evaluation mapping $E \rightarrow E^{**}$ is not an isomorphism, gives a base-norm space $\BN(E)$ that cannot be recovered from the Scott-continuous states on its dual space. We can also obtain order-unit spaces of the form $\OU(E)$ with more than one isometric predual, showing that uniqueness of preduals also fails for order-unit spaces that do not come from W$^*$-algebras.

\section{Background and Basic Results}
If $(P,\leq)$ is a poset, a set $D \subseteq P$ is \emph{directed} iff for each $x, y \in D$ there exists $z \in D$ such that $x \leq z$ and $y \leq z$. We say that $P$ is \emph{directed complete} or a \emph{dcpo} if each directed set has a least upper bound, \emph{a.k.a.} a supremum. We say that $P$ is \emph{bounded directed complete} if for each directed set $D \subseteq P$ that is bounded above, in the sense that there exists an element $u \in P$ such that for all $x \in D$, $x \leq u$, there exists a least upper bound for $D$. 

In a real vector space $E$, if we are given a finite set $(x_i)_{i \in I}$ of elements of $E$ and a matching set $(\alpha_i)_{i \in I}$ of numbers in $[0,1]$ such that $\sum_{i \in I}\alpha_i = 1$, then the point $\sum_{i \in I}\alpha_i x_i$ is called a \emph{convex combination} of the $(x_i)_{i \in I}$. A set $X \subseteq E$ is called \emph{convex} if it is closed under convex combinations, and this is equivalent to being closed under convex combinations consisting of 2 points. The \emph{convex hull} of $X \subseteq E$, written $\co(X)$, is the smallest convex set in $E$ containing $X$. This can be constructed either by taking the union of all convex combinations from $X$, or the intersection of all convex subsets of $E$ containing $X$. 

An \emph{absolutely convex combination} of the points $(x_i)_{i \in I}$ is defined using a set $(\alpha_i)_{i \in I}$ of numbers in $\R$ such that $\sum_{i \in I}|\alpha_i| \leq 1$. A set $X \subseteq E$ is \emph{absolutely convex} if it is closed under absolutely convex combinations. It is easy to prove that a set $X$ is absolutely convex iff it is convex and $X = -X$. The \emph{absolutely convex hull} of $X \subseteq E$, $\absco(X)$ is the smallest absolutely convex set containing $X$. If $X$ is non-empty and convex, then $\absco(X) = \co(X \cup -X)$, an important fact in the theory of base-norm spaces. 

An absolutely convex set $B \subseteq E$ is called \emph{absorbent} (or \emph{absorbing}) if for all $x \in E$, there exists $\alpha \in \Rgeq$ such that $x \in \alpha B$. It follows that $x \in \beta B$ for all $\beta \geq \alpha$. The \emph{gauge} or \emph{Minkowski functional} $\|\blank\|_B$ of an absorbent absolutely convex set $B$ is defined \cite[II.1.14, p. 39]{schaefer} by:
\[
\| x \|_B = \inf \{ \alpha \in \Rgeq \mid x \in \alpha B \}.
\]

This is a seminorm. For any seminorm $\|\blank\| \colon E \rightarrow \Rgeq$, we can define the closed unit ball of the seminormed space $(E,\|\blank\|)$
\[
\Ball(E) = \{ x \in E \mid \| x \| \leq 1 \},
\]
and $\|\blank\|_{\Ball(E)} = \|\blank\|$ for all seminorms. 

We say that an absolutely convex subset $B$ of a real vector space $E$ is \emph{radially bounded} (respectively \emph{radially compact}) if for every line through the origin $L \subseteq E$, the set $L \cap B$ is bounded (respectively compact) in $L$, where boundedness or compactness is defined by choosing an isomorphism $L \cong \R$. The seminorm $\|\blank\|_B$ is a norm iff $B$ is radially bounded \cite[Lemma 0.1.5]{furberthesis}. We always have $B \subseteq \Ball(E, \|\blank\|_B)$, and if $B$ is radially compact, then $B = \Ball(E, \|\blank\|_B)$ \cite[Lemma 0.1.7]{furberthesis}. 

For any normed space, $E$, the dual space $E^*$ is the space of bounded linear functions $E \rightarrow \R$, equipped with the dual norm, whose unit ball is 
\[
\Ball(E^*) = \{ \phi \colon E \rightarrow \R \mid \forall x \in \Ball(E). |\phi(x)| \leq 1 \}.
\]
Every normed space $E$ embeds canonically in its double dual, by interpreting elements of $E$ as functions on $E^*$. We call this mapping $\ev \colon E \rightarrow E^{**}$, defined for $x \in E$ and $\phi \in E^*$ by
\[
\ev(x)(\phi) = \phi(x).
\]
It is linear, and if $E^*$ and $E^{**}$ are given their dual norms, an isometry, but not necessarily surjective \cite[II.3 Theorem 19]{dunford}.

A \emph{wedge} in a real vector space $E$ is a subset $E_+ \subseteq E$ that is closed under addition and multiplication by scalars from $\Rgeq$. A wedge defines a preorder $\leq$ on $E$ by
\[
x \leq y \Leftrightarrow y - x \in E_+.
\]
This preorder is a partial order iff the wedge is a \emph{cone}, which means that additionally $E_+ \cap -E_+ = \{0\}$. A pair $(E,E_+)$ of a real vector space and a cone is called a \emph{partially ordered vector space}. We say that $E_+$ is \emph{generating} iff $E_+$ generates $E$ as a vector space, which is equivalent to $E = E_+ - E_+$. It is also equivalent to $E$ being directed in the usual sense of order theory, \emph{i.e.} for all $x,y \in E$ there exists $z \in E$ such that $x,y \leq z$. A linear map $f \colon (E,E_+) \rightarrow (F,F_+)$ between ordered vector spaces is called \emph{positive} if $f(E_+) \subseteq F_+$. For linear maps, this is equivalent to being monotone with respect to the orders defined by the cones.

An \emph{order unit} in an ordered vector space $(E,E_+)$ is an element $u \in E_+$ such that for all $x \in E$ there exists $n \in \N$ such that $-nu \leq x \leq nu$. The existence of an order unit implies that $E_+$ is generating. We say that it is \emph{archimedean} if $x \leq \frac{1}{n}u$ for all $n \in \N$ implies $x \in -E_+$. An \emph{order-unit space} is a triple $(A,A_+,u)$ where $(A,A_+)$ is an ordered vector space and $u \in A_+$ an archimedean order unit. It has a canonical norm, defined to be the Minkowski functional of the absolutely convex set $[-u,u] \subseteq A$. Order-unit spaces form a category $\OUS$, where the morphisms are positive linear maps that preserve the unit. A \emph{state} on an order-unit space is a positive linear map $\phi \colon A \rightarrow \R$ such that $\phi(u) = 1$. The set of states $\Stat(A) = \OUS(A,\R)$ is a convex subset of $A^*$, the dual space of $A$. We can also consider Scott-continuous linear functionals, which are the positive linear maps $A \rightarrow \R$ that preserve directed suprema. We will only consider these when $A$ is bounded directed-complete. The \emph{normal states} $\NS(A)$ are the Scott-continuous states. A \emph{normal linear functional} is a $\phi \colon A \rightarrow \R$ that is in the linear span of the Scott-continuous linear functionals. It need not be Scott-continuous because it need not be positive. 

Consider a triple $(E,E_+,\tau)$ where $(E,E_+)$ is an ordered vector space and $\tau \colon E \rightarrow \R$ is a positive linear functional, where $\R$ has its usual positive cone $\R_+ = [0,\infty)$. We define the base
\[
B(E) = \{ x \in E_+ \mid \tau(x) = 1 \}
\]
and the ball $U(E)$ to be the absolutely convex hull of $B(E)$. When there is no ambiguity about the ambient space $E$, we write $B$ and $U$ for the base and its absolutely convex hull. We say that $(E,E_+,\tau)$ is a \emph{pre-base-norm space} iff either $E = 0$ or $\tau \neq 0$, and the Minkowski function defined by $U(E)$ is a norm, which is equivalent to $U(E)$ being radially bounded. If $E_+$ is closed in this norm, we say $(E,E_+,\tau)$ is a \emph{base-norm space}. This holds automatically if $U(E)$ is radially compact. We write $\BNS$ for the category of base-norm spaces, where the morphisms are positive linear maps preserving the trace. For an order-unit space $(A,A_+,u)$, the space $(A^*,A_+^*,\ev(u))$, where $A_+^*$ is the space of positive linear maps $A \rightarrow \R$, is a base-norm space, whose base is $\Stat(A)$. 

For any bounded convex subset $X$ of a normed space (in fact in more generality than this, such as a convex prestructure \cite[\S 3]{ozawa80} or a $\D$-algebra \cite[Lemma 4.2]{furber2019} or \cite[Proposition 2.4.15]{furberthesis}) we can define an order-unit space $\EffS(X)$ to consist of the bounded affine $\R$-valued functions on $X$. The vector space operations and order are defined pointwise, and the order unit is simply the constant $1$ function. The notation $\EffS(X)$ is intended to imply that it is the ``signed effects'', $\Eff(X)$ being the unit interval of $\EffS(X)$. For any base-norm space $(E,E_+,\tau)$, $\EffS(B(E))$ is an order-unit space. For each (norm) bounded linear functional $a \colon E \rightarrow \R$, we can define $\rho(a) \in \EffS(B(E))$ to be the restriction of $a$ to $B(E)$. Then the map $\rho$ is an isomorphism of order unit spaces $\EffS(B(E)) \cong (E^*,E^*_+,\tau)$ \cite[Prop. 2.4.17 and Thm. 2.4.18]{furberthesis}. 

Using this isomorphism, we can prove the following fact, the analogue in our setting of \cite[Lemma C.1]{rennelaMSc}. 

\begin{restatable}{lemma}{DCLem}
\label{DCLemma}\hfill
\begin{enumerate}[(i)]
\item Let $X$ be a $\D$-algebra, \emph{e.g.} a bounded convex subset of a Banach space. If $(a_i)_{i \in I}$ is a directed set in $\EffS(X)$, which is bounded from above, then $(a_i)_{i \in I}$ has a supremum, to which it converges pointwise. 
\item Let $(E,E_+,\tau)$ be a base-norm space. Then $(E^*,E^*_+,\tau)$ is bounded directed-complete, and bounded directed sets $(a_i)_{i \in I}$ converge to their suprema in the weak-* topology (\emph{i.e.} the $\sigma(E^*,E)$ topology).
\end{enumerate}
\end{restatable}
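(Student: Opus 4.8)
The plan is to prove (i) directly by a pointwise supremum, and then obtain (ii) from (i) by transporting along the order-unit-space isomorphism $\rho \colon (E^*,E^*_+,\tau) \isoarrow \EffS(B(E))$ recalled above.

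For (i), let $D = (a_i)_{i \in I}$ be a directed set in $\EffS(X)$ with an upper bound $b \in \EffS(X)$, and define $a \colon X \to \R$ by $a(x) = \sup_{i \in I} a_i(x)$. This is well defined since $\{a_i(x)\}_{i\in I}$ is a directed subset of $\R$ bounded above by $b(x)$, and the net $a_i(x)$ then converges to $a(x)$, this being the standard fact that a directed net in $\R$ bounded above converges to its supremum. It then suffices to check that $a \in \EffS(X)$ and that it is the least upper bound of $D$. Boundedness of $a$ holds because $a_{i_0} \le a \le b$ pointwise for any fixed $i_0 \in I$, so $a$ is squeezed between two bounded functions; this is the only place the ``bounded above'' hypothesis is used, and it is essential (the family $x \mapsto nx$ on $[0,1]$ is directed in $\EffS([0,1])$ with no upper bound). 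That $a$ is affine reduces to showing the directed supremum commutes with finite convex combinations, i.e. $\sup_i \sum_k \lambda_k a_i(x_k) = \sum_k \lambda_k \sup_i a_i(x_k)$: the inequality $\le$ is clear, and $\ge$ follows from directedness by choosing, for $\varepsilon > 0$, indices $i_k$ with $a_{i_k}(x_k) > \sup_i a_i(x_k) - \varepsilon$ and then a single index dominating all of them. Finally $a = \sup D$ in $\EffS(X)$ is immediate from the pointwise definition.

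For (ii), a directed set $(a_i)_{i \in I}$ in $(E^*,E^*_+,\tau)$ that is bounded above is carried by $\rho$ to a directed set $(\rho(a_i))_{i\in I}$ in $\EffS(B(E))$ that is bounded above, since $\rho$ is an isomorphism of order-unit spaces; applying (i) to the $\D$-algebra $X = B(E)$ gives a supremum $\bar a \in \EffS(B(E))$, and then $\rho^{-1}(\bar a) = \sup_i a_i$ exists in $E^*$, which proves bounded directed-completeness. It remains to upgrade the pointwise convergence $a_i(x) \to \bar a(x)$ on $x \in B(E)$ provided by (i) to convergence for every $x \in E$, which is exactly weak-* convergence. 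Here I would use the base-norm-space fact $E_+ \cap \ker\tau = \{0\}$: if $0 \neq x \in E_+$ with $\tau(x) = 0$ then, fixing any $b \in B(E)$, the whole ray $b + \Rgeq x$ lies in $B(E)$, so $U(E) = \absco(B(E)) \ni \tfrac12(b + tx) - \tfrac12 b = \tfrac{t}{2}x$ for all $t \ge 0$, contradicting radial boundedness of $U(E)$ on the line $\R x$. Consequently $E_+ = \bigcup_{t \ge 0} tB(E)$ and every $x \in E = E_+ - E_+$ has the form $x = s b_1 - t b_2$ with $s, t \in \Rgeq$ and $b_1, b_2 \in B(E)$ (the case $E = 0$ being trivial, and $B(E)$ nonempty otherwise since $\tau \neq 0$ and $E_+$ is generating), whence $a_i(x) = s\,a_i(b_1) - t\,a_i(b_2) \to s\,\bar a(b_1) - t\,\bar a(b_2) = \rho^{-1}(\bar a)(x)$ by linearity.

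I expect no step to be a serious obstacle; the two points requiring care are the distributivity of directed suprema over convex combinations (needed for affineness in (i)) and the fact --- obtained here for free by routing through $\rho$ rather than arguing directly about positive functionals on $E$ --- that the resulting supremum is a norm-bounded functional. The only genuinely base-norm-specific ingredient is $E_+ \cap \ker\tau = \{0\}$, which is precisely what lets the weak-* statement in (ii) follow from pointwise convergence on the base.
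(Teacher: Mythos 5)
Your proposal is correct and follows essentially the same route as the paper's proof: the pointwise supremum for (i) with affineness checked via interchange of directed suprema and convex combinations, and (ii) obtained by transporting through the isomorphism $\rho\colon E^* \isoarrow \EffS(B(E))$ and then extending pointwise convergence from $B(E)$ to all of $E$ via the decomposition $x = \alpha x_+ - \beta x_-$. The only difference is that you supply an explicit justification (via $E_+ \cap \ker\tau = \{0\}$ and radial boundedness of $U(E)$) for that decomposition, which the paper simply asserts as a standard base-norm-space fact.
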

For reasons of space, certain proofs are in the appendix.

\begin{restatable}{proposition}{PredualFnlsNormal}
\label{PredualFnlsNormalProp}
Let $(E,E_+,\tau)$ be a base-norm space. The map $\ev \colon E \rightarrow E^{**}$ maps elements of $E$ to normal linear functionals on the bounded directed-complete order-unit space $(E^*,E^*_+,\tau)$. It follows that $E^*$ is separated by its normal linear functionals, and in fact, its normal states. 
\end{restatable}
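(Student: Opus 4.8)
The plan is to deduce everything from Lemma~\ref{DCLemma}(ii), which already tells us that $(E^*,E^*_+,\tau)$ is bounded directed-complete, that each bounded directed set $D \subseteq E^*$ has a supremum $\sup D \in E^*$, and that $D$ converges to $\sup D$ in the $\sigma(E^*,E)$-topology. First I would reduce to the positive case: since $E = E_+ - E_+$ (base-norm spaces are directed --- $U(E) = \absco(B(E))$ is absorbent and $U(E) \subseteq E_+ - E_+$), it suffices to show that $\ev(y)$ is a Scott-continuous positive linear functional on $(E^*,E^*_+,\tau)$ for each $y \in E_+$; then an arbitrary $x = y - z$ with $y,z \in E_+$ gives $\ev(x) = \ev(y) - \ev(z)$, a difference of Scott-continuous functionals, hence a normal linear functional.

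So fix $y \in E_+$. Positivity is immediate, since $\ev(y)(\phi) = \phi(y) \geq 0$ for every $\phi \in E^*_+$. For Scott-continuity, let $D \subseteq E^*$ be bounded directed; by Lemma~\ref{DCLemma}(ii) it has a supremum $a = \sup D$, and $D \to a$ in $\sigma(E^*,E)$. Because $\ev(y)$ is $\sigma(E^*,E)$-continuous by the very definition of the weak-* topology, the net $(\ev(y)(d))_{d \in D}$ (indexed by $D$ with its own order) converges to $\ev(y)(a)$; being monotone, since $\ev(y)$ is monotone, this real net converges to its supremum, so $\ev(y)(\sup D) = \sup_{d \in D} \ev(y)(d)$. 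Hence $\ev(y)$ preserves directed suprema, and as $E^*$ is only bounded directed-complete these are all the suprema that occur.

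The separation statements then follow. Given $\phi \neq \psi$ in $E^*$, they differ as functionals on $E$, so $\phi(x) \neq \psi(x)$ for some $x \in E$, and the normal linear functional $\ev(x)$ separates them. To separate $\phi,\psi$ by a normal \emph{state}, write $x = y - z$ with $y,z \in E_+$; then $\ev(y)$ or $\ev(z)$ separates $\phi,\psi$, say $\ev(y)$, so $y \neq 0$, and therefore $\tau(y) > 0$ because in a pre-base-norm space $E_+ = \Rgeq \cdot B(E)$ (otherwise $\|\blank\|_{U(E)}$ would fail to be a norm). Then $\tau(y)^{-1}\ev(y)$ is positive, Scott-continuous, and sends the order unit $\tau$ of $E^*$ to $1$, i.e.\ is a normal state, and it still separates $\phi$ and $\psi$.

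I expect this to be mostly bookkeeping once Lemma~\ref{DCLemma}(ii) is in hand. The two points that need a little care are: in the second step, the passage from ``$\ev(y)$ is weak-* continuous, so $(\ev(y)(d))_{d \in D}$ converges to $\ev(y)(\sup D)$'' to ``$\ev(y)(\sup D) = \sup_d \ev(y)(d)$'', which uses monotonicity of $\ev(y)$ together with the elementary fact that a convergent monotone net of reals converges to its supremum; and the normalization in the last step, which relies on nonzero positive elements of a base-norm space having strictly positive trace.
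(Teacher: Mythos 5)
Your proposal is correct and follows essentially the same route as the paper: reduce to positive elements via $E = E_+ - E_+$ (the paper uses $x = \alpha x_+ - \beta x_-$ with $x_\pm \in B(E)$, so its witnesses are already states), obtain Scott-continuity of $\ev(y)$ from Lemma~\ref{DCLemma}(ii), and conclude separation, normalizing by the trace where needed. The paper phrases the key step as the pointwise computation of suprema on $B(E)$ rather than your weak-* convergence plus monotonicity of the net $(\ev(y)(d))_{d \in D}$, but these are two faces of the same fact from that lemma.
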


\section{Base-Norm and Order-Unit Spaces from Normed Spaces}
In the following section, it will be helpful to have the following definitions. Given two normed spaces $E$ and $F$, we will use two different norms on $E \times F$. The first makes $E \times F$ into the \emph{$\ell^1$-direct sum} $E \oplus_1 F$ and is defined by
\[
\| (x,y) \| = \| x\| + \|y\|.
\]
The second makes $E \times F$ into the \emph{$\ell^\infty$-direct sum} $E \oplus_\infty F$ and is defined by
\[
\| (x,y) \| = \max \{ \|x\|, \|y\| \}.
\]
These definitions are standard in the theory of Banach spaces \cite[p. 5]{handbookbanach}, so we leave verifying that they are norms to the reader, if this is necessary. For perspective, we note that in the category $\Banone$, whose objects are Banach spaces and whose morphisms are contractions, \emph{i.e.} linear maps of operator norm $\leq 1$, the $\ell^\infty$-direct sum is the product \cite[Example 2.1.7.d]{Borceux94}, and the $\ell^1$-direct sum is the coproduct \cite[Example 2.2.4.h]{Borceux94}. 

The following lemma collects some standard facts for easy reference.
\begin{restatable}{lemma}{StandardDirectSum}
\label{StandardDirectSumLemma}
The projections $\pi_1 \colon E \times F \rightarrow E$ and $\pi_2 \colon E \times F \rightarrow F$ are bounded for both the $\ell^1$ and $\ell^\infty$ direct sums. A sequence $(x_i,y_i)_{i \in \N}$ converges to $(x,y)$ in $E \oplus_1 F$ iff $(x_i,y_i)_{i \in \N}$ converges to $(x,y)$ in $E \oplus_\infty F$ iff $(x_i)_{i \in \N}$ converges to $x$ in $E$ and $(y_i)_{i \in I}$ converges to $y$ in $F$. 
\end{restatable}

Let $E$ be a normed space. We write $\|\blank\|_E$ when we want to emphasize that we are talking about the norm of $E$. We describe how to build a base-norm space $\BN(E)$ such that the base $B(\BN(E)) \cong \Ball(E)$. The underlying vector space of $\BN(E)$ is $E \times \R$. We define
\[
\BN(E)_+ = \{ (x,y) \in E \times \R \mid \|x\|_E \leq y \},
\]
noting that $\|x\|_E \leq y$ can only occur if $y \geq 0$, because norms only take nonnegative values. The trace $\tau \colon \BN(E) \rightarrow \R$ is defined by $\tau(x,y) = y$. 

\begin{proposition}
\label{BNDefProp}
For any normed space $E$, $(\BN(E),\BN(E)_+,\tau)$ is a base-norm space, and is a Banach base-norm space if $E$ is a Banach space. The projection $\pi_1 \colon \BN(E) \rightarrow E$ restricts to an affine isomorphism $B(\BN(E)) \rightarrow \Ball(E)$, so $\BN(E)$ is the free base-norm space on $\Ball(E)$. The unit ball of $\BN(E)$ is $\Ball(E) \times [-1,1]$, and the norm of $\BN(E)$ can be characterized as
\[
\|(x,y)\| = \max \{ \|x\|, |y| \},
\]
so the underlying normed space of $\BN(E)$ is the $\ell^\infty$-direct sum $E \oplus_\infty \R$.
\end{proposition}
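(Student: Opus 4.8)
The plan is to verify the base-norm space axioms for $(\BN(E),\BN(E)_+,\tau)$ directly from the definitions in the background section, using the characterisation of base-norm spaces in terms of the radial boundedness (resp. radial compactness) of $U(\BN(E)) = \absco(B(\BN(E)))$, and then to identify the base, the unit ball, and the norm by explicit computation.

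First I would check the easy structural facts: $\BN(E)_+ = \{(x,y) \mid \|x\|_E \leq y\}$ is closed under addition and under nonnegative scalar multiplication (immediate from the triangle inequality and absolute homogeneity of $\|\blank\|_E$), so it is a wedge; and $(x,y) \in \BN(E)_+ \cap -\BN(E)_+$ forces $\|x\|_E \leq y$ and $\|x\|_E \leq -y$, hence $y = 0$ and $x = 0$, so it is a cone. It is generating because $(x,0) = (x,\|x\|_E) - (0,\|x\|_E)$ with both summands in $\BN(E)_+$, and $(0,y)$ is handled similarly; alternatively $E \times \R = \BN(E)_+ - \BN(E)_+$ is clear once one has the base description. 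The trace $\tau(x,y)=y$ is linear and positive since $(x,y)\in\BN(E)_+$ implies $y \geq \|x\|_E \geq 0$, and $\tau \neq 0$, so the pre-base-norm axiom reduces to showing $U(\BN(E))$ is radially bounded. Next I would compute the base: $\tau(x,y)=1$ and $(x,y)\in\BN(E)_+$ means $y=1$ and $\|x\|_E \leq 1$, so $B(\BN(E)) = \Ball(E) \times \{1\}$, and $\pi_1$ restricted to it is the obvious affine bijection onto $\Ball(E)$; this already gives that $\BN(E)$ is the free base-norm space on $\Ball(E)$, modulo the universal property, which follows from the general theory of base-norm spaces generated by a convex set (as cited in the excerpt for $\EffS$ and $\rho$, or one can give the two-line argument: a trace-preserving positive map out of $\BN(E)$ is determined by its values on $B(\BN(E))$ since $B(\BN(E))$ spans).

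The main computational step is the norm. Since $\absco(X) = \co(X \cup -X)$ for nonempty convex $X$, we have $U(\BN(E)) = \co\big((\Ball(E)\times\{1\}) \cup (-\Ball(E)\times\{-1\})\big)$. A point of this set has the form $\lambda(x,1) + (1-\lambda)(-x',-1) = (\lambda x - (1-\lambda)x',\, 2\lambda - 1)$ with $\|x\|_E,\|x'\|_E \leq 1$ and $\lambda \in [0,1]$; writing $y = 2\lambda - 1 \in [-1,1]$, the first coordinate ranges exactly over vectors of norm $\leq \lambda + (1-\lambda) = 1$, so $U(\BN(E)) = \Ball(E) \times [-1,1]$. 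I would then show directly that the Minkowski functional of $\Ball(E)\times[-1,1]$ is $(x,y)\mapsto \max\{\|x\|_E,|y|\}$: indeed $(x,y) \in \alpha\big(\Ball(E)\times[-1,1]\big)$ iff $\|x\|_E \leq \alpha$ and $|y| \leq \alpha$, so the infimum of such $\alpha$ is $\max\{\|x\|_E,|y|\}$. This is a norm (it is $\geq 0$, vanishes only at $0$, is absolutely homogeneous, subadditive), so $U(\BN(E))$ is radially bounded and $(\BN(E),\BN(E)_+,\tau)$ is a pre-base-norm space, and in fact $U(\BN(E)) = \Ball(\BN(E))$. The identification with the $\ell^\infty$-direct sum $E \oplus_\infty \R$ is then definitional.

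Finally, for the base-norm (as opposed to merely pre-base-norm) condition I must show $\BN(E)_+$ is closed in this norm, and for the Banach claim that $\BN(E)$ is norm-complete when $E$ is. Both follow from Lemma~\ref{StandardDirectSumLemma}: convergence in $E\oplus_\infty\R$ is coordinatewise, so if $(x_i,y_i)\to(x,y)$ with each $\|x_i\|_E \leq y_i$, then $\|x\|_E = \lim\|x_i\|_E \leq \lim y_i = y$ (using continuity of the norm), whence $\BN(E)_+$ is closed; and a Cauchy sequence in $E\oplus_\infty\R$ has Cauchy coordinates, which converge in $E$ (if $E$ is Banach) and in $\R$, so the sequence converges. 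The step I expect to require the most care is getting the absolutely-convex-hull computation $U(\BN(E)) = \Ball(E)\times[-1,1]$ exactly right — in particular verifying that every point with $y \in [-1,1]$ and $\|x\|_E \leq 1$ is genuinely attained as a convex combination of a point of $\Ball(E)\times\{1\}$ and one of $-\Ball(E)\times\{-1\}$, and not merely that the hull is contained in $\Ball(E)\times[-1,1]$ — but as sketched above this is a short direct calculation rather than a real obstacle.
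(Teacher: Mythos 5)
Your proposal is correct and follows essentially the same route as the paper's proof: the wedge/cone/generating checks, the identification $B(\BN(E)) = \Ball(E)\times\{1\}$, the computation $U = \Ball(E)\times[-1,1]$ with both inclusions, and the Minkowski functional $\max\{\|x\|_E,|y|\}$. The only divergence is at the end, where the paper gets the base-norm and Banach properties from radial compactness of $U$ and $\sigma$-convexity of $\Ball(E)$ (citing general results from the thesis), whereas you verify closedness of $\BN(E)_+$ and completeness directly from coordinatewise convergence in $E\oplus_\infty\R$ --- an equally valid and somewhat more self-contained finish.
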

\begin{proof}
We first show that $\BN(E)_+$ is a wedge. If $(x,y) \in \BN(E)_+$, \emph{i.e.} $\|x\|_E \leq y$, then for all $\alpha \geq 0$, we have $\|\alpha x\|_E = \alpha \|x\|_E \leq \alpha y$, so $\alpha (x,y) \in \BN(E)_+$. If, for $i \in \{1,2\}$, we have $(x_i,y_i) \in \BN(E)_+$ and therefore $\|x_i\|_E \leq y_i$, we have $\|x_1 + x_2 \|_E \leq \|x_1\|_E + \|x_2\|_E \leq y_1 + y_2$, so $(x_1,y_1) + (x_2,y_2) \in \BN(E)_+$. 

To show $\BN(E)_+$ is a cone, suppose that $(x,y) \in \BN(E)_+$ and $-(x,y) \in \BN(E)_+$. This means that $y \geq 0$ and $y \leq 0$, so $y = 0$. From this it follows that $\|x\|_E \leq 0$, and so $x = 0$ by the defining property of a norm. 

To show that $\BN(E)_+$ generates $\BN(E)$, let $(x,y) \in E \times \R$. If $\|x \| \leq y$, then $(x,y) \in \BN(E)_+$. If, on the other hand, $\|x\| > y$, we have $\|x \|- y > 0$, and therefore $(0,\|x\| - y) \in \BN(E)_+$. Tautologically, we also have $(x,\|x\|) \in \BN(E)_+$. As $(x,y) = (x,\|x\|) - (0,\|x\|-y)$, we have shown that the span of $\BN(E)_+$ is $\BN(E)$. 

The map $\tau$ is linear, and it is positive because if $(x,y) \in \BN(E)_+$ then $y \geq 0$. As $0 \in E$, no matter how $E$ is defined, we have $(0,1) \in \BN(E)_+$, and $\tau(0,1) = 1$, so $\tau$ is not the zero functional. 

We now need to concern ourselves with the unit ball of $\BN(E)$. We define $B = \Ball(E)$. Observe that 
\[
B(\BN(E)) = \{ (x,y) \in E \times \R \mid \| x \| \leq y \text{ and } \tau(x,y) = 1 \} = \{ (x,1) \in E \mid \|x \| \leq 1 \} = B \times \{1\}.
\]

Once we have finished proving $\BN(E)$ is a base-norm space, this shows that $\pi_1 \colon \BN(E) \rightarrow E$ restricts to an affine isomorphism $B(\BN(E)) \cong B$. 

Observe that $-B(\BN(E)) = - (B \times \{1\}) = -B \times \{-1\} = B \times \{-1\}$ by the absolute convexity of $B$. The unit ball of $\BN(E)$ is defined to be the absolutely convex hull of $B(\BN(E))$, which, as it is nonempty, is the same as $\co(B \times \{-1\} \cup B \times \{1\})$. Since $B \times \{1\}$ is already convex, we only need to use convex combinations with two elements to produce all of $U$, so 
\[
U = \{ \alpha (x_+,1) + (1-\alpha)(x_-,-1) \mid \alpha \in [0,1], x_+,x_- \in B \}. 
\]
Now, as $B$ is convex, the point $x' = \alpha x_+ + (1-\alpha)x_- \in B$, and we have
\begin{align*}
\alpha (x', 1) + (1-\alpha)(x',-1) &= \alpha (\alpha x_+ + (1-\alpha)x_-,1) + (1-\alpha)(\alpha x_+ + (1-\alpha)x_-,-1) \\
 &= (\alpha(\alpha x_+ + (1-\alpha)x_+) + (1-\alpha)(\alpha x_- + (1-\alpha)x_-), \alpha - (1 -\alpha)) \\
 &= (\alpha x_+ + (1-\alpha)x_-, \alpha - (1-\alpha)) \\
 &= \alpha (x_+,1) + (1-\alpha)(x_-,-1).
\end{align*}
This shows that every element of $U$ is an element of $B \times [-1,1]$. The opposite inclusion follows from $B \times \{1\} \subseteq B \times [-1,1]$ and the absolute convexity of $B \times [-1,1]$. So $U = B \times [-1,1]$. It follows that
\begin{align*}
\|(x,y)\|_U &= \inf \{ \alpha \in \Rgeq \mid (x,y) \in \alpha U \} = \inf \{ \alpha \in \Rgeq \mid (x,y) \in \alpha (B \times [-1,1]) \} \\
 &= \inf \{ \alpha \in \Rgeq \mid \| x \| \leq \alpha \text{ and } | y | \leq \alpha \} = \max \{ \|x\|, |y| \}.
\end{align*}
From this it follows that $\BN(E) = E \oplus_\infty \R$ as a normed space. 

We can finally show that $U$ is radially compact. If $\|(x,y)\| =1$ then either $\|x\| = 1$ and $y \leq 1$ or $|y| = 1$ and $\|x\| \leq 1$. In either case, $(x,y) \in B \times [-1,1] = U$. It follows from this that in any ray $L \subseteq \BN(E)$, the elements of norm $1$ form a closed interval, so $U$ is radially compact. From this it follows that $\BN(E)$ is a base-norm space by \cite[Proposition 2.2.6 (ii)]{furberthesis}. We have that $\BN(E)$ is the free base-norm space on $B$ because affine isomorphisms between the bases of base-norm spaces extend to isomorphisms of base-norm spaces by \cite[Corollary 2.4.9]{furberthesis}. 

Finally, if $E$ is a Banach space, then $B$ is $\sigma$-convex, so $\BN(E)$ is a Banach base-norm space by \cite[Proposition 2.4.11]{furberthesis}. 
\end{proof}

If $\Hil$ is the two-dimensional real Hilbert space, then Figure \ref{BaseDiscFig} depicts $\BN(\Hil)$. The corresponding construction of an order unit space $\OU(E)$ has underlying space $\OU(E) = E \times \R$, and the positive cone $\OU_+(E)$ defined in the same way as for $\BN(E)$. We then define the unit $u = (0,1)$. 

\begin{restatable}{proposition}{OUDefPr}
\label{OUDefProp}
For any normed space $E$, $(\OU(E),\OU(E)_+,u)$ is an order-unit space, and is a Banach order-unit space if $E$ is a Banach space. The unit ball
\[
[-1,1]_{\OU(E)} = \{ (x,y) \in E \times \R \mid \|x\|_E + |y| \leq 1 \},
\]
so $\|(x,y)\| = \|x\|_E + |y|$, \emph{i.e.} the underlying normed space of $\OU(E)$ is the $\ell^1$-direct sum $E \oplus_1 \R$. 
\end{restatable}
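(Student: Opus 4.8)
The plan is to mirror the proof of Proposition~\ref{BNDefProp}, since $\OU(E)_+$ is literally the same subset of $E \times \R$ as $\BN(E)_+$. In particular the verifications that $\OU(E)_+$ is a wedge, that $\OU(E)_+ \cap -\OU(E)_+ = \{0\}$ (so that it is a cone), and that $\OU(E)_+$ is generating, have already been carried out there and can simply be quoted. What is new is to check that $u = (0,1)$ is an archimedean order unit and to identify the canonical norm.

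For the order-unit property I would, given $(x,y) \in E \times \R$, choose $n \in \N$ with $n \geq \|x\|_E + |y|$ and observe that both $(x,y) + nu = (x,\, n+y)$ and $nu - (x,y) = (-x,\, n-y)$ lie in $\OU(E)_+$, since $\|{\pm}x\|_E = \|x\|_E \leq n - |y| \leq n \mp y$. For the archimedean property, suppose $(x,y) \leq \frac{1}{n} u$ for all $n \in \N$; then $\frac{1}{n} u - (x,y) = (-x,\, \frac{1}{n} - y) \in \OU(E)_+$ gives $\|x\|_E \leq \frac{1}{n} - y$ for every $n$, and letting $n \to \infty$ yields $\|x\|_E \leq -y$, i.e.\ $-(x,y) = (-x,-y) \in \OU(E)_+$, so $(x,y) \in -\OU(E)_+$ as required.

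To pin down the norm I would compute $[-u,u] = \{(x,y) : (x,y) \leq u\} \cap \{(x,y) : -u \leq (x,y)\}$. The condition $(x,y) \leq u$ unfolds to $u-(x,y) = (-x,\, 1-y) \in \OU(E)_+$, i.e.\ $\|x\|_E \leq 1-y$, and $-u \leq (x,y)$ unfolds to $\|x\|_E \leq 1+y$; since $\min\{1-y,\, 1+y\} = 1-|y|$, the intersection is exactly $\{(x,y) : \|x\|_E + |y| \leq 1\}$ (which in particular forces $|y| \leq 1$). The canonical norm of $\OU(E)$ is the Minkowski functional of this set, and as $(x,y) \in \alpha [-u,u]$ iff $\|x\|_E + |y| \leq \alpha$, this functional is $\|(x,y)\| = \|x\|_E + |y|$; hence the underlying normed space is $E \oplus_1 \R$. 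When $E$ is a Banach space, $E \oplus_1 \R$ is complete, being an $\ell^1$-direct sum of complete spaces (cf.\ Lemma~\ref{StandardDirectSumLemma}), so $\OU(E)$ is then a Banach order-unit space. I do not expect any genuine obstacle here: every step is a short direct computation, and the only mildly delicate point is the identity $\min\{1-y,\, 1+y\} = 1-|y|$ used when intersecting the two half-space conditions that define $[-u,u]$.
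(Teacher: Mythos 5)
Your proposal is correct and follows essentially the same route as the paper: quote Proposition~\ref{BNDefProp} for the cone facts, verify the archimedean order-unit property directly, unfold the two inequalities defining $[-u,u]$ to get $\|x\|_E+|y|\leq 1$, and read off the $\ell^1$-norm from the Minkowski functional. The only cosmetic difference is that the paper reduces the order-unit check to positive elements via a lemma, whereas you verify $-nu\leq(x,y)\leq nu$ directly for arbitrary $(x,y)$; both computations are equally short.
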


To characterize the dual spaces of the previous constructions, we need a pairing. By this we mean a bilinear map $\langle \blank, \blank \rangle \colon E \times F \rightarrow \R$ between vector spaces $E$ and $F$. Since we have used $E^*$ for the continuous dual space, we write $E^\odot$ for the algebraic dual, \emph{i.e.} the vector space of all linear maps $E \rightarrow \R$. We write $\lev$ for the function we get by currying a pairing $\langle \blank, \blank \rangle \colon E \times F \rightarrow \R$ on the left, so $\lev \colon F \rightarrow E^\odot$ is defined by 
\[
\lev(y)(x) = \langle x, y \rangle.
\]
For the pairings we deal with, we will ensure that the range of $\lev$ is contained in $E^*$. 

We now characterize the dual spaces of $\BN(E)$ and $\OU(E)$ in terms of the dual space of $E$. We define a pairing which can either be interpreted as being $\langle \blank, \blank \rangle \colon \BN(E) \times \OU(E^*) \rightarrow \R$ or $\OU(E) \times \BN(E^*) \rightarrow \R$, as they have the same underlying vector spaces. Here is the definition:
\begin{equation}
\label{DualPairingDefEqn}
\langle (x,\lambda), (\phi, \mu) \rangle = \phi(x) + \lambda\mu.
\end{equation}

\begin{proposition}
\label{OUDualCharacProp}
For any normed space $E$, the pairing $\langle \blank, \blank \rangle \colon \OU(E) \times \BN(E^*) \rightarrow \R$ from \eqref{DualPairingDefEqn} is bilinear and defines an isomorphism of base-norm spaces $\lev \colon \BN(E^*) \rightarrow \OU(E)^*$.
\end{proposition}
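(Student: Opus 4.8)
The plan is to verify that $\lev$ is a linear bijection which, together with its inverse, is positive and preserves the trace; since the excerpt has already recorded that $(\OU(E)^*,\OU(E)^*_+,\ev(u))$ is a base-norm space, this is exactly what it means for $\lev$ to be an isomorphism in $\BNS$. En route one also sees that $\lev$ is an isometry onto $\OU(E)^*$, which is automatic (the norm of a base-norm space is the Minkowski functional of $\absco(B(E))$, hence determined by cone and trace) but is a useful consistency check against the identification $\BN(E^*) = E^* \oplus_\infty \R$ of Proposition~\ref{BNDefProp}.

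First I would dispatch the bookkeeping. Bilinearity of $\langle(x,\lambda),(\phi,\mu)\rangle = \phi(x)+\lambda\mu$ is immediate: in the first variable $\phi$ is linear and $\lambda\mapsto\lambda\mu$ is linear, and in the second variable the evaluation $\phi\mapsto\phi(x)$ is linear and $\mu\mapsto\lambda\mu$ is linear; in particular $\lev$ is linear. For well-definedness, the estimate $|\phi(x)+\lambda\mu|\leq\|\phi\|_{E^*}\|x\|_E+|\lambda||\mu|\leq\max\{\|\phi\|_{E^*},|\mu|\}\,(\|x\|_E+|\lambda|)$ shows $\lev(\phi,\mu)$ is a bounded functional on $\OU(E)=E\oplus_1\R$, with $\|\lev(\phi,\mu)\|\leq\|(\phi,\mu)\|_{\BN(E^*)}$, so $\lev$ does land in $\OU(E)^*$. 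Injectivity is clear (evaluate $\lev(\phi,\mu)$ at $(x,0)$ and at $(0,1)$). Surjectivity follows from the splitting of $E\oplus_1\R$: given $\psi\in\OU(E)^*$, put $\phi:=\psi(\blank,0)$, which lies in $E^*$ because $\|(x,0)\|_{\OU(E)}=\|x\|_E$, and $\mu:=\psi(0,1)$; then $\psi(x,\lambda)=\psi(x,0)+\lambda\psi(0,1)=\lev(\phi,\mu)(x,\lambda)$.

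It then remains to handle the order and trace structure. Trace preservation is a one-line check: the trace on $\BN(E^*)$ sends $(\phi,\mu)$ to $\mu$, while the trace on $\OU(E)^*$ is $\ev(u)$, and $\ev(u)(\lev(\phi,\mu))=\lev(\phi,\mu)(0,1)=\mu$; the same identity handles $\lev^{-1}$. The one genuinely computational point is the cone: I would show $(\phi,\mu)\in\BN(E^*)_+$, \emph{i.e.} $\|\phi\|_{E^*}\leq\mu$, if and only if $\lev(\phi,\mu)\in\OU(E)^*_+$, \emph{i.e.} $\phi(x)+y\mu\geq 0$ whenever $\|x\|_E\leq y$. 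For the forward direction, $\|\phi\|_{E^*}\leq\mu$ and $\|x\|_E\leq y$ (all quantities nonnegative) give $\phi(x)\geq-\|\phi\|_{E^*}\|x\|_E\geq-\mu y$. For the converse, evaluating positivity on the cone elements $(\pm x,1)$ with $\|x\|_E\leq 1$ yields $|\phi(x)|\leq\mu$ for all such $x$, hence $\|\phi\|_{E^*}\leq\mu$. This equivalence says precisely that $\lev$ and $\lev^{-1}$ are both positive, which completes the proof.

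I expect the main obstacle to be this last cone computation, where the shape of the $\OU$-cone $\{\|x\|_E\leq y\}$ and the $\BN$-cone $\{\|\phi\|_{E^*}\leq\mu\}$ must be matched up — it is in essence the standard fact that these $\ell^1$-type and $\ell^\infty$-type cones are dual to one another — while everything else is routine manipulation of the $\ell^1$/$\ell^\infty$ direct-sum norms and the evaluation pairing.
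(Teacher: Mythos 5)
Your proof is correct, and every local computation you use (injectivity by evaluating at $(x,0)$ and $(0,1)$, positivity of $\lev(\phi,\mu)$ via $|\phi(x)|\leq\|\phi\|\,\|x\|\leq\mu y$, surjectivity by splitting off $\psi(\blank,0)$ and $\psi(0,1)$) also appears in the paper. The assembly is genuinely different, though. The paper does not verify the base-norm isomorphism conditions directly: it proves that $\lev$ restricts to an affine bijection from $B(\BN(E^*))$ onto $\Stat(\OU(E))$ and then invokes an external result (Proposition 3.1 of the cited earlier paper) that a linear map restricting to an affine isomorphism of bases is an isomorphism of base-norm spaces. Consequently the paper only needs surjectivity onto the \emph{states} (using $\|\phi\|\leq 1$ for a state to see that $\psi=\phi(\blank,0)$ is a contraction), and it never has to prove positivity of $\lev^{-1}$ --- that comes for free from the cited lemma. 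You instead prove surjectivity onto all of $\OU(E)^*$ (which is if anything cleaner, since $\psi(x,\lambda)=\psi(x,0)+\lambda\psi(0,1)$ needs no positivity) and establish the two-sided cone equivalence $\|\phi\|_{E^*}\leq\mu \Leftrightarrow \lev(\phi,\mu)\in\OU(E)^*_+$ by hand; the converse direction of that equivalence, via the test elements $(\pm x,1)$, is exactly the order-embedding argument the paper defers to its proof of the companion Proposition \ref{BNDualCharacProp}. A further minor divergence: you get boundedness of $\lev(\phi,\mu)$ from a direct $\ell^1$/$\ell^\infty$ norm estimate, whereas the paper deduces it from positivity of functionals on order-unit spaces together with the fact that $\BN(E^*)$ is spanned by its cone. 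Your route is self-contained and avoids the external citation; the paper's route is shorter on the page and reuses machinery it needs elsewhere anyway.
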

\begin{proof}
We show that it is bilinear on the left, as the argument for the right hand side is similar. Let $(x_1,\lambda_1),(x_2,\lambda_2) \in \OU(E)$ and $(\phi,\mu) \in \BN(E^*)$. Then
\begin{align*}
\langle (x_1,\lambda_1) + (x_2, \lambda_2), (\phi, \mu) \rangle &= \langle (x_1 + x_2, \lambda_1 + \lambda_2), (\phi,\mu) \rangle = \phi(x_1 + x_2) + (\lambda_1 + \lambda_2)\mu \\
 &= \phi(x_1) + \lambda_1\mu + \phi(x_2) + \lambda_2\mu = \langle (x_1,\lambda_1), (\phi,\mu) \rangle + \langle (x_2,\lambda_2), (\phi,\mu) \rangle.
\end{align*}
If $\alpha \in \R$, $(x,\lambda) \in \OU(E)$ and $(\phi,\mu) \in \BN(E^*)$, we have
\begin{align*}
\langle \alpha (x,\lambda), (\phi,\mu) \rangle &= \langle (\alpha x, \alpha \lambda), (\phi,\mu) \rangle = \phi(\alpha x) + \alpha\lambda\mu = \alpha\phi(x) + \alpha\lambda\mu = \alpha(\phi(x) + \lambda\mu) \\
 &= \alpha \langle (x,\lambda), (\phi,\mu) \rangle.
\end{align*}
We now show that $\lev \colon \BN(E^*) \rightarrow \OU(E)^\odot$ is injective. It suffices to show that its kernel is $\{0\}$. So let $(\phi,\mu) \in \BN(E^*)$, and suppose that for all $(x,\lambda) \in \OU(E)$ we have $\lev(\phi,\mu)(x,\lambda) = 0$, \emph{i.e.} $\phi(x) + \lambda\mu = 0$. In particular, this implies that for all $(x,0) \in \OU(E)$, we have $\phi(x) = 0$, so $\phi = 0$. Then, for $(x,1) \in \OU(E)$, we have $0 = \phi(x) + \mu = \mu$. So we have shown $(\phi,\mu) = 0$.

We now show that if $(\phi,\mu) \in \BN(E^*)_+$, then $\lev(\phi,\mu)$ is positive. So let $(\phi,\mu) \in \BN(E^*)_+$ and $(x,\lambda) \in \OU(E)_+$. We have $|\phi(x)| \leq \|x\|\|\phi\| \leq \lambda\mu$, from the inequalities $\|x\| \leq \lambda$ and $\|\phi\| \leq \mu$ that come from the positivity of $(x,\lambda)$ and $(\phi,\mu)$ respectively. It follows that $-\phi(x) \leq \lambda\mu$ and therefore 
\[
0 \leq \phi(x) + \lambda\mu = \langle (x,\lambda),(\phi,\mu) \rangle = \lev(\phi,\mu)(x,\lambda),
\]
proving $\lev(\phi,\mu)$ is positive. It follows that $\lev(\phi,\mu)$ is bounded\footnote{This is true for any positive linear functional $\phi$ on an order-unit space $(A,A_+,u)$ because for all $a \in A$, $\|a\| \leq 1$ is equivalent to $-u \leq a \leq u$, and so for all $a \in \Ball(A)$, $\phi(-u) \leq \phi(a) \leq \phi(u)$ holds, and proves $|\phi(a)| \leq \phi(u)$.}. Since $\BN(E^*)$ is the linear span of $\BN(E^*)_+$, this implies that $\lev$ maps $\BN(E^*)$ into $\OU(E)^*$.

To finish the proof that $\lev$ is an isomorphism, we show that its restriction to $B(\BN(E^*))$ is a bijection onto $B(\OU(E)^*) = \Stat(\OU(E))$. As we already proved that $\lev$ is linear, this is an affine isomorphism of bases and so $\lev$ is an isomorphism of base-norm spaces by \cite[Proposition 3.1]{furber2019}. 

If $(\phi,1) \in B(\BN(E^*))$, we already know that $\lev(\phi,\mu)$ is positive, so we only need to show that it has trace 1 in $\OU(E)^*$. Taking the trace in the dual space of an order-unit space is done by evaluating at the unit, which is $(0,1) \in \OU(E)$. So $\lev(\phi,1)(0,1) = \phi(0) + 1 = 1$ shows that $\lev(\phi,1) \in B(\OU(E)^*)$. 

To finish, we only need to show that $\lev \colon B(\BN(E^*)) \rightarrow B(\OU(E)^*)$ is surjective. Let $\phi \in B(\OU(E)^*)$, \emph{i.e.} $\phi \colon \OU(E) \rightarrow \R$ is positive and $\phi(0,1) = 1$. Define $\psi \colon E \rightarrow \R$ by $\psi(x) = \phi(x,0)$. We show that $\psi$ is a linear map $E \rightarrow \R$ and $\|\psi\| \leq 1$ (so in particular, $\psi \in E^*$).

To prove $\psi$ is linear, observe that $\psi(0) = \phi(0,0) = 0$, and for all $x,y \in E$, $\alpha \in \R$:
\[
\psi(\alpha x+y) = \phi(\alpha x + y,0) = \phi(\alpha (x,0) + (y,0)) = \alpha \phi(x,0) + \phi(y,0) = \alpha \psi(x) + \psi(y).
\]
To see that $\|\psi\| \leq 1$, let $x \in \Ball(E)$. Then in $\OU(E)$, $\|(x,0)\| = \|x\| + |0| = \|x\|$ by Proposition \ref{OUDefProp}. As $\phi$ is a state, $\|\phi\| \leq 1$ \cite[Proposition 1.2.8]{furberthesis}, so $|\psi(x)| = |\phi(x,0)| \leq \|(x,0)\| = \|x\| \leq 1$. As this holds for all $x \in \Ball(E)$, we have shown $\|\psi\| \leq 1$. 

It follows that $(\psi,1) \in B(\BN(E^*))$. To finish the proof that $\lev$ is surjective, let $(x,\lambda) \in \OU(E)$ and observe $\lev(\psi,1)(x,\lambda) = \psi(x) + \lambda = \phi(x,0) + \phi(0,\lambda) = \phi(x,\lambda)$, \emph{i.e.} $\lev(\psi,1) = \phi$. 
\end{proof}

It also works the other way round, though it is slightly trickier, so we need a lemma.

\begin{restatable}{lemma}{BallExtension}
\label{BallExtensionLemma}
Let $E$ be a normed vector space, and suppose $a \colon \Ball(E) \rightarrow \R$ is a bounded affine map with $a(0) = 0$. Then $a$ admits a unique extension to a bounded linear map $\psi \colon E \rightarrow \R$. 
\end{restatable}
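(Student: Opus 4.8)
The plan is to extend $a$ by rescaling. Set $\psi(0) = 0$, and for $x \neq 0$ choose any $\lambda \in \Rgeq$ with $\lambda \geq \|x\|_E$ (so automatically $\lambda > 0$ and $x/\lambda \in \Ball(E)$), and put $\psi(x) = \lambda\, a(x/\lambda)$. First I would check that this does not depend on the choice of $\lambda$: if $\|x\|_E \leq \lambda \leq \mu$, then $x/\mu = (\lambda/\mu)(x/\lambda) + (1 - \lambda/\mu)\cdot 0$ is a convex combination of points of $\Ball(E)$, so affinity of $a$ and $a(0) = 0$ give $a(x/\mu) = (\lambda/\mu)\,a(x/\lambda)$, hence $\mu\,a(x/\mu) = \lambda\,a(x/\lambda)$. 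Taking $\lambda = 1$ shows $\psi(x) = a(x)$ for every $x \in \Ball(E)$, so $\psi$ genuinely extends $a$; and then $|\psi(x)| = |a(x)| \leq \sup_{y \in \Ball(E)}|a(y)|$ on $\Ball(E)$, which will yield boundedness of $\psi$ once linearity is in hand.

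Next I would verify linearity. For additivity, given $x,y \in E$, pick a single $\lambda \geq \max\{\|x\|_E,\|y\|_E\}$; then $(x+y)/(2\lambda) = \tfrac12(x/\lambda) + \tfrac12(y/\lambda)$ is a convex combination of elements of $\Ball(E)$, so $a\bigl((x+y)/(2\lambda)\bigr) = \tfrac12 a(x/\lambda) + \tfrac12 a(y/\lambda)$, and multiplying by $2\lambda$ gives $\psi(x+y) = \psi(x) + \psi(y)$. For positive homogeneity, the case $t \in [0,1]$ follows because $tx/\lambda = t(x/\lambda) + (1-t)\cdot 0$ is again a convex combination, giving $\psi(tx) = t\psi(x)$; the case $t > 1$ then follows from $\psi(x) = \psi\bigl(\tfrac1t(tx)\bigr) = \tfrac1t\psi(tx)$. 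Finally $\psi(-x) = -\psi(x)$ since $\psi(x) + \psi(-x) = \psi(0) = a(0) = 0$ by additivity, and combining these cases gives homogeneity for all real scalars, hence linearity, hence (by the bound above) boundedness.

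For uniqueness, if $\psi'$ is any linear map extending $a$, then $\psi'$ and $\psi$ agree on $\Ball(E)$, which is absorbent and therefore spans $E$; so $\psi' - \psi$ is a linear functional vanishing on a spanning set, forcing $\psi' = \psi$.

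I do not anticipate a genuine obstacle: the whole argument is driven by affinity together with $a(0) = 0$, which turns rescaling into convex combination. The only point needing a little care is, in each of the linearity verifications, to fix \emph{one} scaling parameter $\lambda$ large enough to pull all of the relevant vectors simultaneously into $\Ball(E)$ and to spot the right convex combination (e.g. the factor $2\lambda$ for $x+y$); once that bookkeeping is set up correctly everything else is routine.
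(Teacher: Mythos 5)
Your proof is correct and takes essentially the same route as the paper's: extend $a$ by rescaling into the ball and use affinity together with $a(0)=0$ to turn scalings into convex combinations. Your preliminary check that $\lambda\,a(x/\lambda)$ is independent of the choice of $\lambda \geq \|x\|_E$ tidies the subsequent case analysis (and deriving $\psi(-x)=-\psi(x)$ from additivity is marginally cleaner than the paper's direct computation of $a(-x)=-a(x)$), but the underlying argument is the same.
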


\begin{restatable}{proposition}{BNDualCharac}
\label{BNDualCharacProp}
For any normed space $E$, the pairing $\langle \blank, \blank \rangle \colon \BN(E) \times \OU(E^*) \rightarrow \R$ from \eqref{DualPairingDefEqn} is bilinear and defines an isomorphism of order-unit spaces $\lev \colon \OU(E^*) \rightarrow \BN(E)^*$.
\end{restatable}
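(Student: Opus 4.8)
The plan is to mirror the proof of Proposition~\ref{OUDualCharacProp}, with the roles of $\BN$ and $\OU$ interchanged, so that $\OU(E^*)$ plays the order-unit space and $\BN(E)^*$ the target. First I would check bilinearity of $\langle\blank,\blank\rangle\colon\BN(E)\times\OU(E^*)\to\R$; this is the same two routine computations as before, now with $(x,\lambda)$ ranging over $\BN(E)=E\times\R$ and $(\phi,\mu)$ over $\OU(E^*)=E^*\times\R$. In particular $\lev\colon\OU(E^*)\to\BN(E)^\odot$, $\lev(\phi,\mu)(x,\lambda)=\phi(x)+\lambda\mu$, is linear, and it lands in the continuous dual $\BN(E)^*$ by the estimate $|\phi(x)+\lambda\mu|\leq\|\phi\|\|x\|+|\mu||\lambda|\leq(\|\phi\|+|\mu|)\max\{\|x\|,|\lambda|\}$, which by Propositions~\ref{BNDefProp} and~\ref{OUDefProp} equals $\|(\phi,\mu)\|_{\OU(E^*)}\,\|(x,\lambda)\|_{\BN(E)}$; so $\lev$ is in fact a contraction. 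Injectivity is immediate: if $\phi(x)+\lambda\mu=0$ for all $(x,\lambda)$, put $\lambda=0$ to get $\phi=0$, then $(x,\lambda)=(0,1)$ to get $\mu=0$.

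Next I would prove surjectivity and match up the order structure. For surjectivity, given $\Phi\in\BN(E)^*$ set $\phi:=\Phi(\blank,0)$ and $\mu:=\Phi(0,1)$; since $\BN(E)=E\times\R$ is literally a vector-space direct sum, linearity of $\Phi$ gives $\Phi(x,\lambda)=\Phi(x,0)+\lambda\Phi(0,1)=\phi(x)+\lambda\mu$, while $\phi\in E^*$ because $|\Phi(x,0)|\leq\|\Phi\|\,\|(x,0)\|_{\BN(E)}=\|\Phi\|\|x\|$; hence $\Phi=\lev(\phi,\mu)$ with $(\phi,\mu)\in\OU(E^*)$. (If one instead stays within the affine-function picture, identifying $\BN(E)^*\cong\EffS(B(\BN(E)))\cong\EffS(\Ball(E))$ via $\pi_1$ and sending $\lev(\phi,\mu)$ to the bounded affine map $x\mapsto\phi(x)+\mu$ on $\Ball(E)$, then surjectivity is precisely the assertion that a bounded affine $a$ on $\Ball(E)$ with $a(0)=0$ extends to an element of $E^*$, i.e.\ Lemma~\ref{BallExtensionLemma}, applied to $a-a(0)$.) For the orders: the order unit of $\BN(E)^*$ is the trace $\tau$, and $\tau=\lev(0,1)=\lev(u)$, so $\lev$ is unital; and the crucial cone computation is that $\lev(\phi,\mu)$ is positive on $\BN(E)_+=\{(x,y):\|x\|\leq y\}$ iff $-\phi(x)\leq\|x\|\mu$ for all $x$ (test on $(0,1)$ and on the points $(x,\|x\|)$, and conversely dominate $\phi(x)+y\mu\geq(y-\|x\|)\mu\geq0$), which using $\sup_{\|x\|\leq1}(-\phi(x))=\|\phi\|$ is equivalent to $\|\phi\|\leq\mu$, i.e.\ to $(\phi,\mu)\in\OU(E^*)_+$. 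Thus $\lev$ restricts to a bijection $\OU(E^*)_+\to\BN(E)^*_+$.

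Putting this together, $\lev$ is a unital linear bijection that is positive with positive inverse, hence an isomorphism of order-unit spaces — such a map is automatically isometric, since the norm of an order-unit space is the Minkowski functional of $[-u,u]$ and is therefore determined by the order and unit alone, which $\lev$ preserves in both directions. I do not expect a genuine obstacle; the only things to get right are the bookkeeping — that $\BN(E)^*$ is an order-unit space with order unit $\tau$ (as the dual of a base-norm space) while $\OU(E^*)$ is one by Proposition~\ref{OUDefProp} applied to the Banach space $E^*$ — and the identification of the two positive cones through the elementary fact $\sup_{\|x\|\leq1}(-\phi(x))=\|\phi\|$.
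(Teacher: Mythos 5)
Your argument is correct, and for the key step it takes a genuinely different (and more elementary) route than the paper. The paper's proof is organised entirely around the identification $\rho\colon\BN(E)^*\cong\EffS(B(\BN(E)))$: it verifies that each $\lev(\phi,\mu)$ restricts to a bounded affine function on $B(\BN(E))=\Ball(E)\times\{1\}$, checks unitality, positivity and the order-embedding property there, and obtains surjectivity by taking a bounded affine $a$ on the base, subtracting $a(0,1)$, and invoking Lemma~\ref{BallExtensionLemma} to extend the result to an element of $E^*$ --- exactly the alternative you sketch in your parenthetical remark. Your primary argument instead stays in the dual Banach space and gets surjectivity essentially for free from the splitting $(x,\lambda)=(x,0)+\lambda(0,1)$ of $\BN(E)=E\oplus_\infty\R$, together with the estimate $|\Phi(x,0)|\leq\|\Phi\|\,\|x\|$; this bypasses Lemma~\ref{BallExtensionLemma} entirely (that lemma is really doing the work of re-linearising an affine function on a ball, which is unnecessary once one works with $\BN(E)^*$ rather than $\EffS(\Ball(E))$). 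Your cone computation is also carried out on $\BN(E)_+$ directly, testing on $(0,1)$ and $(x,\|x\|)$ and using $\sup_{\|x\|\leq 1}(-\phi(x))=\|\phi\|$, where the paper tests only on the base; both are valid since positivity on the cone and nonnegativity on the base coincide. What the paper's route buys is that it simultaneously establishes the order-unit-space structure on $\EffS(B(\BN(E)))$ needed elsewhere (e.g.\ in Counterexample~\ref{OUIrreflexiveCounterexample}); what yours buys is brevity and independence from the extension lemma. Your closing observation that a unital order-isomorphism is automatically isometric is correct and matches the paper's implicit use of the same fact.
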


Taking $\Hil$ to be the 2-dimensional real Hilbert space again, as $\Hil^* \cong \Hil$ via the inner product, Figure \ref{OUDiscFig} depicts any of the three isomorphic spaces $\BN(\Hil)^*$, $\OU(\Hil^*)$ or $\OU(\Hil)$. 

It follows from Lemma \ref{DCLemma} (ii) that for any normed space $E$, $\BN(E)^* \cong \OU(E^*)$ is bounded directed-complete. We can now prove the following fact about it.

\begin{proposition}
\label{OUDCNormProp}
If $E$ is a normed space, and $(y_i,\mu_i)_{i \in I}$ is a directed set in $\OU(E^*)$ with supremum $(y,\mu)$, then $(y_i,\mu_i) \to (y,\mu)$ in norm. Therefore every positive linear functional $\phi \colon \OU(E^*) \rightarrow \R$ is Scott-continuous. 
\end{proposition}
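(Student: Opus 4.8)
The plan is to exploit the explicit $\ell^1$-direct-sum description of $\OU(E^*)$ from Proposition \ref{OUDefProp}: recall that $\OU(E^*) = E^* \times \R$ with $\OU(E^*)_+ = \{(y,\mu) \mid \|y\|_{E^*} \leq \mu\}$, unit $(0,1)$, and norm $\|(y,\mu)\| = \|y\|_{E^*} + |\mu|$, so that $(a,\alpha) \leq (b,\beta)$ holds precisely when $\|b - a\|_{E^*} \leq \beta - \alpha$. First I would record two consequences of this order description for the monotone net $(y_i,\mu_i)_{i \in I}$ (so $i \leq j$ implies $(y_i,\mu_i) \leq (y_j,\mu_j)$) bounded above by its supremum $(y,\mu)$: along the net the scalars $\mu_i$ are increasing and bounded above by $\mu$, and whenever $i \leq j$,
\[
\|(y_j,\mu_j) - (y_i,\mu_i)\| = \|y_j - y_i\|_{E^*} + (\mu_j - \mu_i) \leq 2(\mu_j - \mu_i),
\]
since positivity of $(y_j,\mu_j) - (y_i,\mu_i)$ gives $\|y_j - y_i\|_{E^*} \leq \mu_j - \mu_i$.

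A monotone net of reals bounded above is Cauchy, so $(\mu_i)_{i \in I}$ is Cauchy; combined with the displayed estimate and directedness of $I$, the net $(y_i,\mu_i)_{i \in I}$ is itself Cauchy in $\OU(E^*)$. Since the dual space $E^*$ is complete, Proposition \ref{OUDefProp} shows $\OU(E^*) = E^* \oplus_1 \R$ is a Banach space, so $(y_i,\mu_i)$ converges in norm to some $(y',\mu') \in \OU(E^*)$.

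It remains to identify $(y',\mu')$ with the supremum $(y,\mu)$. The cone $\OU(E^*)_+$ is norm-closed (it is where the continuous function $(y,\mu) \mapsto \|y\|_{E^*} - \mu$ is $\leq 0$). For each fixed $i$ the net $\bigl((y_j,\mu_j) - (y_i,\mu_i)\bigr)_{j \geq i}$ lies in this closed cone and converges to $(y',\mu') - (y_i,\mu_i)$, so $(y_i,\mu_i) \leq (y',\mu')$; thus $(y',\mu')$ is an upper bound. If $(z,\zeta)$ is any upper bound, then $(z,\zeta) - (y_i,\mu_i) \in \OU(E^*)_+$ for all $i$ and this net converges to $(z,\zeta) - (y',\mu')$, again in the closed cone, so $(y',\mu') \leq (z,\zeta)$. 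Hence $(y',\mu')$ is the least upper bound, i.e.\ $(y',\mu') = (y,\mu)$, and the directed net converges to its supremum in norm. (Alternatively one could identify $(y',\mu')$ with $(y,\mu)$ by noting norm convergence implies weak-* convergence, invoking Lemma \ref{DCLemma}(ii) through the isomorphism $\OU(E^*) \cong \BN(E)^*$ of Proposition \ref{BNDualCharacProp}, and using Hausdorffness of the weak-* topology.) For the last sentence, any positive linear functional $\phi$ on an order-unit space is bounded, hence norm-continuous; monotonicity of $\phi$ makes $\bigl(\phi(y_i,\mu_i)\bigr)_{i \in I}$ an increasing net of reals bounded above by $\phi(y,\mu)$, and by the norm convergence just established together with continuity it converges to $\phi(y,\mu)$, so $\phi(y,\mu) = \sup_i \phi(y_i,\mu_i)$, which is exactly Scott-continuity of $\phi$.

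The only point carrying real content is the observation that the $\ell^1$-norm increments $\|(y_j,\mu_j) - (y_i,\mu_i)\|$ are dominated by $2(\mu_j - \mu_i)$, so that Cauchyness of the automatically-Cauchy scalar net $(\mu_i)$ propagates to the whole net; after that, completeness, closedness of the cone, and the passage to Scott-continuity are routine. One might worry that the trace component $\mu'$ of the norm limit could fail to equal $\sup_i \mu_i$, or that the limit could fail to be the supremum rather than merely an upper bound, but the closed-cone argument settles both at once.
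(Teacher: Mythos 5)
Your proof is correct, but it takes a genuinely different route from the paper's. Both arguments turn on the same key inequality---positivity of a difference $(y_j,\mu_j)-(y_i,\mu_i)$ in $\OU(E^*)$ forces $\|y_j-y_i\|_{E^*}\le\mu_j-\mu_i$, so the $\ell^1$-norm of an increment is controlled by twice the scalar increment---but you deploy it to show the net is norm-Cauchy, then use completeness of $E^*$ (hence of $\OU(E^*)=E^*\oplus_1\R$) together with norm-closedness of the cone to produce the limit and identify it with the supremum. The paper instead first locates the limit: it invokes Lemma \ref{DCLemma}(ii) through the isomorphism $\OU(E^*)\cong\BN(E)^*$ of Proposition \ref{BNDualCharacProp} to get weak-* convergence of the net to its supremum $(y,\mu)$, reads off $\mu_i\to\mu$ by evaluating against $(0,1)\in\BN(E)$, and then applies the same order inequality in the form $\|y-y_j\|_{E^*}\le\mu-\mu_j$ to upgrade to norm convergence. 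Your version is self-contained---it avoids the weak-* topology and the predual entirely, and in passing re-proves that $\OU(E^*)$ is bounded directed-complete, since the supremum is obtained as a norm limit rather than assumed located---at the modest cost of needing completeness of $E^*$ (automatic for a dual space) and the closed-cone verification that the norm limit is the least upper bound. The paper's version is shorter given that Lemma \ref{DCLemma}(ii) and Proposition \ref{BNDualCharacProp} are already in hand. The final step, deducing Scott-continuity of positive (hence bounded, hence norm-continuous) functionals from norm convergence to the supremum, is the same in both.
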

\begin{proof}
Let $(y_i,\mu_i)_{i \in I}$ be a directed set in $\OU(E^*)$ with supremum $(y,\mu)$. As $\OU(E^*) \cong \BN(E)^*$ (Proposition \ref{BNDualCharacProp}), Lemma \ref{DCLemma} (ii) implies that $(y_i,\mu_i) \to (y,\mu)$ in the weak-* topology. Since $(0,1) \in \BN(E)$, the mapping $\lev(0,1) \colon \OU(E^*) \rightarrow \R$ is weak-* continuous, so
\[
\mu_i = y_i(0) + \mu_i = \lev(0,1)(y_i,\mu_i) \to \lev(0,1)(y,\mu) = y(0) + \mu = \mu.
\]
So for all $\epsilon > 0$, there exists $i \in I$ such that for all $j \geq i$ we have $|\mu - \mu_j| < \epsilon$. As $(y,\mu)$ is the least upper bound, we have for all $j \in I$, $(y_j, \mu_j) \leq (y,\mu)$, so $\| y - y_j \|_{E^*} \leq \mu - \mu_j$. So if $j \geq i$, we have\footnote{Note that as $\mu - \mu_j \geq \| y - y_j \|_{E^*} \geq 0$, we have $\mu - \mu_j = | \mu - \mu_j |$.} $\| y - y_j \|_{E^*} \leq \mu - \mu_j = |\mu - \mu_j | < \epsilon$. As this holds for arbitrary $\epsilon > 0$, this shows that $y_j \to y$ in $E$. Since we already showed that $\mu_j \to \mu$ in $\R$, it follows that $(y_j,\mu_j) \to (y,\mu)$ in $E \oplus_1 \R$ by Lemma \ref{StandardDirectSumLemma}, and therefore in $\OU(E^*)$ by Proposition \ref{OUDefProp}. 

If $\phi \colon \OU(E^*) \rightarrow \R$ is positive, then $(\phi(y_i,\mu_i))_{i \in I}$ is a bounded directed set in $\R$. As $\phi$ is bounded, it is norm continuous, and so $\phi(y_i,\mu_i) \to \phi(y,\mu)$. Therefore $\phi(y,\mu)$ is the least upper bound of $(\phi(y_i,\mu_i))_{i \in I}$. As this holds for any bounded directed set $(y_i,\mu_i)_{i \in I}$ with supremum $(y,\mu)$ in $\OU(E^*)$, we have shown that $\phi$ is Scott-continuous. 
\end{proof}

\section{Counterexamples}
Given what we have proved so far, to obtain a convex set $X$ such that the evaluation mapping from $X$ to the normal states of $\Eff(X)$ is not surjective, we only need a normed space $E$ that is not reflexive. As a concrete example, we  use the space $c_0$. This space consists of sequences $(a_i)_{i \in \N}$ of real numbers, converging to $0$, and the vector space operations are defined pointwise, and the norm is defined as
\begin{equation}
\label{SupNormEqn}
\|(a_i)_{i \in \N}\| = \sup_{i \in \N} |a_i|.
\end{equation}
The dual is isomorphic to the space of summable sequences $\ell^1$, which is given the norm
\[
\|(\phi_i)_{i \in \N}\| = \sum_{i \in \N}|\phi_i|.
\]
The isomorphism comes from the pairing
\begin{equation}
\label{SequenceSpacePairingEqn}
\langle (a_i)_{i \in \N}, (\phi_i)_{i \in \N} \rangle = \sum_{i \in \N}a_i\phi_i,
\end{equation}
for which $\lev \colon \ell^1 \rightarrow c_0^*$ is a bijective isometry. The dual space of $\ell^1$ is isomorphic to $\ell^\infty$, the space of bounded sequences, which is equipped with the norm \eqref{SupNormEqn} and paired with $\ell^1$ via \eqref{SequenceSpacePairingEqn}. The fact that $c_0$ is not reflexive then follows from $c_0 \subseteq \ell^\infty$ being a proper subset, for example the constant $1$ sequence is in $\ell^\infty$ but not in $c_0$. 

\begin{counterexample}
\label{BNIrreflexiveCounterexample}
If $E$ is a normed space that is not reflexive, such as $c_0$, then
\begin{enumerate}[(i)]
\item $\BN(E)$ is a base-norm space such that there are Scott-continuous states $\BN(E)^* \rightarrow \R$ that are not of the form $\ev(x,\lambda)$ for any $(x,\lambda) \in \BN(E)$. 
\item $X = \Ball(E)$ is a $\D$-algebra such that there are Scott-continuous states $\Eff(X) \rightarrow [0,1]$ that are not of the form $\eta_X(x)$ for any $x \in X$. 
\end{enumerate}
\end{counterexample}
\begin{proof}
We deal with part (i) first. Since $E$ is not reflexive, there exists $\phi \in \Ball(E^{**})$ such that $\phi \neq \ev(x)$ for any $x \in E$, which is to say, for all $x \in E$, there exists $y \in E^*$ such that $\phi(y) \neq y(x)$. 

As $\lev \colon \OU(E^*) \isoarrow \BN(E)^*$ is an isomorphism of order-unit spaces (Proposition \ref{BNDualCharacProp}), it suffices to show that there are normal states on $\OU(E^*)$ that are not of the form $\ev(x,\lambda) \circ \lev$, where $(x,\lambda) \in \BN(E)$. 

By Proposition \ref{OUDCNormProp}, $\lev(\phi,1)$ defines a Scott-continuous state on $\OU(E^*)$ via the pairing \eqref{DualPairingDefEqn}. For any $(x,\lambda) \in \BN(E)$, there exists $y \in E^*$ such that $\phi(y) \neq y(x)$, and therefore 
\begin{align*}
(\ev(x,\lambda) \circ \lev)(y,0) &= \ev(x,\lambda)(\lev(y,0)) = \lev(y,0)(x,\lambda) = \langle (x,\lambda), (y,0) \rangle = \\
 y(x) + 0 &= y(x) \neq \phi(y) = \langle (y,0), (\phi,1) \rangle = \lev(\phi,1)(y,0).
\end{align*}
so $\lev(\phi,1) \neq \ev(x,\lambda) \circ \lev$ for any $(x,\lambda) \in \BN(E)$. Therefore $\lev(\phi,1)$ is the normal state on $\OU(E^*)$ that we seek. 

Part (ii) then follows from the isomorphism $\Ball(E) \cong B(\BN(E))$ (Proposition \ref{BNDefProp}), and the fullness of the embeddings $B \colon \BNS \rightarrow \Conv$ and $[0,1]_\blank \colon \OUS \rightarrow \EA$ \cite[Propositions 3.1 and 3.2]{furber2019}. 
\end{proof}

In the following, we will use the fact that a Banach space $E$ is reflexive iff $E^*$ is \cite[II.3 Corollary 24]{dunford}. 

\begin{restatable}{counterexample}{HypStoneCounter}
\label{OUIrreflexiveCounterexample}
If $E$ is a Banach space that is not reflexive, such as $c_0$. 
\begin{enumerate}[(i)]
\item To simplify notation, we write $A = E^*$. $\OU(A)$ is a bounded directed-complete order-unit space, separated by its normal states, such that the evaluation mapping $\OU(A) \rightarrow \EffS(\NS(\OU(A)))$ is not surjective, and therefore not an isomorphism. 
\item Taking $X = \Ball(E)$, $\Eff(X)$ is a directed-complete effect module, separated by its normal states, such that the evaluation mapping $\Eff(X) \rightarrow \Eff(\NS(\Eff(X)))$ is not surjective, and therefore not an isomorphism.
\end{enumerate}
\end{restatable}

The space $c$ consists of convergent real-valued sequences. It is made into a Banach space as a closed subspace of $\ell^\infty$, and $c_0$ is a closed subspace of $c$ of codimension 1. The convex set $\Ball(c_0)$ has no extreme points, but the constant $1$ function is an extreme point of $\Ball(c)$, so $\Ball(c_0) \not\cong \Ball(c)$ as convex sets, and so $\BN(c_0) \not\cong \BN(c)$. But $c^* \cong \ell^1$ isometrically \cite[\S IV.4.3, p.65]{banach}, and $c_0^* \cong \ell^1$ isometrically as described above, so $\BN(c_0)^* \cong \OU(c_0^*) \cong \OU(\ell^1) \cong \OU(c^*) \cong \BN(c)^*$ through order-unit space isomorphisms, so we have shown:
\begin{counterexample}
There exist base-norm spaces $E_1, E_2$ such that $E_1^* \cong E_2^*$ as order-unit spaces, but $E_1 \not\cong E_2$ isometrically. 
\end{counterexample}

This cannot happen with W$^*$-algebras -- if a W$^*$-algebra $A \cong E_1^*$ and $A \cong E_2^*$ isometrically, then $E_1 \cong E_2$ isometrically \cite[Corollary 1.13.3]{Sakai71}.

\subsubsection*{Acknowledgements}
Robert Furber has been financially supported by the Danish Council for Independent Research, Project 4181-00360.

\bibliographystyle{eptcs}
\bibliography{qpl2019}

\appendix
\section{Proofs}

\DCLem*
\begin{myproof}
\begin{enumerate}[(i)]
\item Let $(a_i)_{i \in I}$ be a directed set in $\EffS(X)$, bounded above by $b \in \EffS(X)$. Define $a \colon X \rightarrow \R$ by
\[
a(x) = \sup_{i \in I}a_i(x).
\]
The supremum exists because for each $x \in X$, $(a_i(x))_{i \in I}$ is a directed set in $\R$ bounded above by $b(x)$, and this also shows that $a \leq b$ so $a$ is a bounded function. Therefore we only need to show that it is affine to prove that $a \in \EffS(X)$. Let $x,y \in X$ and $\alpha \in [0,1]$. Then
\begin{align*}
a(\alpha x + (1-\alpha)y) &= \sup_{i \in I}a_i(\alpha x + (1 - \alpha)y)  \\
 &= \sup_{i \in I}(\alpha a_i(x) + (1 - \alpha)a_i(y)) \\
 &= \lim_{i \in I}(\alpha a_i(x) + (1 - \alpha)a_i(y)) \\
 &= \alpha \lim_{i \in I}a_i(x) + (1 - \alpha)\lim_{i \in I}a_i(y) & \text{continuity in $\R$} \\
 &= \alpha \sup_{i \in I}a_i(x) + (1 - \alpha)\sup_{i \in I}a_i(y) \\
 &= \alpha a(x) + (1 -\alpha)a(y),
\end{align*}
where we used the fact that directed sets in $\R$ converge to their suprema. Therefore $a \in \EffS(X)$. It follows that $a = \sup_{i \in I}a_i$ because the ordering is pointwise and this holds at each $x \in X$. Similarly, as directed sets converge to their suprema in $\R$, $(a_i)_{i \in I}$ converges pointwise to $a$ in $\EffS(X)$. 
\item We have that $\EffS(B(E))$ is isomorphic to the order-unit space $(E^*,E^*_+,\tau)$ dual to $E$, so $E^*$ is bounded directed-complete and directed sets $(a_i)_{i \in I}$ in $E^*$ converge pointwise on $B(E)$ to their suprema. Let $a$ be the supremum of $(a_i)_{i \in I}$. Convergence in the weak-* topology is equivalent to pointwise convergence on $E$, so we only need to show that $a_i(x) \to a(x)$ for all $x \in E$. To make things non-trivial, we assume that $B(E)$ is not empty. As $E$ is a base-norm space, $x = \alpha x_+ - \beta x_-$ for $\alpha,\beta \in \Rgeq$ and $x_+,x_- \in B(E)$. By the linearity of each $a_i$ and $a$, as well as the continuity of linear operations in $\R$, we have
\[
a_i(x) = \alpha a_i(x_+) - \beta a_i(x_-) \to \alpha a(x_+) - \beta a(x_-) = a(x),
\]
so $a_i \to a$ in the weak-* topology. \QED
\end{enumerate}
\end{myproof}

\PredualFnlsNormal*
\begin{proof}
To show this, it suffices to show that $\ev(x)$ is Scott-continuous for all $x \in B(E)$, because, if $E \neq 0$, every element $x \in E$ can be expressed as $x = \alpha x_+ - \beta x_-$ where $\alpha,\beta \in \Rgeq$ and $x_+,x_- \in B(E)$, and $\ev$ is a linear map. 

So let $x \in B(E)$ and let $(a_i)_{i \in I}$ be a bounded directed set, with supremum $a$. Then
\[
\ev(x)\left(\sup_{i \in I}a_i\right) = \left(\sup_{i \in I}a_i\right)(x) = \sup_{i \in I}a_i(x) = \sup_{i \in I}\ev(x)(a_i)
\]
by definition.

For the last statement, suppose that $a,b \in E^*$ and for all normal states $\phi \in \NS(E^*)$, we have $\phi(a) = \phi(b)$. By the above, we have for all $x \in B(E)$, $a(x) = \ev(x)(a) = \ev(x)(b) = b(x)$, so $a$ and $b$ agree on $B(E)$. As $E$ is spanned by $B(E)$, we have $a = b$. Therefore $E^*$ is separated by its normal states, and therefore by its normal linear functionals. 
\end{proof}

\StandardDirectSum*
\begin{proof}
We first show that $\pi_1 \colon E \oplus_1 F \rightarrow E$ is bounded. Let $(x,y) \in E \oplus_1 F$ such that $\|(x,y)\| \leq 1$, \emph{i.e.} $\| x\| + \|y\| \leq 1$. Then $\|\pi_1(x,y)\| = \|x\| \leq \|x\| + \|y\| \leq 1$, proving $\pi_1$ is bounded. The proof for $\pi_2$ is similar.

Now let us consider $\pi_1 \colon E \oplus_\infty F \rightarrow E$. Let $(x,y) \in E \oplus_\infty F$ such that $\|(x,y)\| \leq 1$, \emph{i.e.} $\max \{ \|x\| , \|y\| \} \leq 1$. Then $\|\pi_1(x,y)\| = \|x\| \leq \max \{ \|x\|,\|y\| \} \leq 1$, so $\pi_1$ is bounded. Again, the proof for $\pi_2$ is similar. 

Since bounded maps of normed spaces are continuous, it therefore follows that if $(x_i,y_i) \to (x,y)$ in either $E \oplus_1 F$ or $E \oplus_\infty F$, then $x_i \to x$ in $E$ and $y_i \to y$ in $F$. 

For the opposite implications, we consider $E \oplus_\infty F$ first. Suppose that $x_i \to x$ in $E$ and $y_i \to y$ in $F$. Then for each $\epsilon > 0$, we can find an $N$ such that for all $i \geq N$ both $\|x_i - x\| < \epsilon$ and $\|y_i - y\| < \epsilon$ (by taking the maximum of an $N$ for which this holds for $\|x_i - x\|$ and an $M$ for which this holds for $\|y_i - y\|$). Then for all $i \geq N$ we have
\[
\| (x_i,y_i) - (x,y) \| = \| (x_i - x, y_i - y) \| = \max \{ \|x_i - x\|, \|y_i - y\| \} < \epsilon,
\]
and since this holds for all $\epsilon > 0$, $(x_i,y_i) \to (x,y)$ in $E \oplus_\infty F$. 

The argument for $E \oplus_1 F$ is similar, except we use $\frac{\epsilon}{2}$, and then for all $i \geq N$ we have
\[
\| (x_i,y_i) - (x,y) \| = \|x_i-x\| + \|y_i-y\| < \frac{\epsilon}{2} + \frac{\epsilon}{2} = \epsilon,
\]
and as this holds for all $\epsilon > 0$, we have $(x_i,y_i) \to (x,y)$ in $E \oplus_1 F$. 
\end{proof}

\OUDefPr*
\begin{proof}
We have already proved that $\OU(E)_+$ is a cone that generates $\OU(E)$ in Proposition \ref{BNDefProp}. There is therefore a simpler way to prove that $u$ is a strong order unit \cite[Lemma A.5.1]{furberthesis}, which is to prove that for all $(x,y) \in \OU(E)_+$, there exists $n \in \N$ such that $(x,y) \leq n u$. 

So let $(x,y) \in \OU(E)_+$, \emph{i.e.} $\|x\|_E \leq y$. Define $n = \ceil{\|x\|_E + y}$. Then
\[
n u \geq (\|x\|_E + y)(0,1) = (0,\|x\|_E + y),
\]
and
\[
(0,\|x\|_E + y) - (x,y) = (-x,\|x\|_E) \in \OU(E)_+
\]
because $\|-x\|_E = \|x\|_E$. Therefore $(x,y) \leq (0,\|x\|_E + y) \leq n u$, as required. 

To show that $\OU(E)$ is archimedean, suppose that $(x,y) \leq \frac{1}{n}u$ for all $n \in \N$, \emph{i.e.} $(-x, \frac{1}{n} - y) \in \OU(E)_+$, and therefore $\|-x\| \leq \frac{1}{n} - y$ for all $n \in \N$. By the archimedean property of $\R$, this implies $\|-x\|_E \leq -y$, and so $-(x,y) \in \OU(E)_+$, \emph{i.e.} $(x,y) \leq 0$. 

So we have proved that $(\OU(E),\OU(E)_+,u)$ is an order-unit space. We now find the unit ball and the norm. The unit ball is $[-u,u]$, or $\{ (x,y) \in E \times \R \mid -(0,1) \leq (x,y) \leq (0,1) \}$.
We treat each of the two inequalities on $(x,y)$ in turn and then combine them.
\[
-(0,1) \leq (x,y) \Leftrightarrow (x,y+1) \geq 0 \Leftrightarrow \| x\|_E \leq y + 1 \Leftrightarrow -y \leq 1 - \|x\|_E
\]
and
\[
(x,y) \leq (0,1) \Leftrightarrow (-x,1-y) \geq 0 \Leftrightarrow \|-x\|_E \leq 1 - y \Leftrightarrow y \leq 1 - \|x\|_E
\]
Taken together, we have shown
\[
(x,y) \in [-u,u] \Leftrightarrow |y| \leq 1 - \|x\|_E \Leftrightarrow \|x\|_E + |y| \leq 1.
\]
Therefore $\|(x,y)\|_{[-u,u]} = \|x\|_E + |y|$, which is to say, the underlying normed space of $\OU(E)$ is the $\ell^1$-direct sum $E \oplus_1 \R$. As the projection maps $\pi_1 \colon E \times \R \rightarrow E$ and $\pi_2 \colon E \times \R \rightarrow \R$ are bounded and linear (Lemma \ref{StandardDirectSumLemma}), they are uniformly continuous, and so preserve Cauchyness of sequences. Therefore, if $E$ is a Banach space, for every Cauchy sequence $(x_i,\lambda_i)_{i \in \N}$ in $\OU(E)$, there is an $x \in E$ and a $\lambda \in \R$ such that $x_i \to x$ and $\lambda \to \lambda$, so by Lemma \ref{StandardDirectSumLemma} $(x_i,\lambda_i) \to (x,\lambda)$ in $\OU(E)$. This shows that if $E$ is a Banach space, so is $\OU(E)$. 
\end{proof}

\BallExtension*
\begin{proof}
Define $\psi(0) = 0$ and for all $x \in E$ with $x \neq 0$, $\psi(x) = \|x\| a\left(\frac{x}{\|x\|}\right)$. We first show that $\psi$ agrees with $a$ on $\Ball(E)$. This is immediate for $0$, and if $x \neq 0$, we have $0 < \|x\| \leq 1$ and $x = \|x\| \frac{x}{\|x\|} + (1-\|x\|)0$, which is a convex combination. So as $a$ is affine
\[
a(x) = \|x\| a\left(\frac{x}{\|x\|}\right) + (1 - \|x\|)a(0) = \|x\|a\left(\frac{x}{\|x\|}\right) = \psi(x).
\]

To show that $\psi$ is linear, we first show that it preserves scalar multiplication, in three cases. If $\lambda = 0$, then $\psi(\lambda x) = \psi(0) = 0 = 0\psi(x) = \lambda \psi(x)$. If $\lambda > 0$, then
\[
\psi(\lambda x) = \|\lambda x\|a\left(\frac{\lambda x}{\|\lambda x\|}\right) = \lambda \|x\| a\left(\frac{\lambda x}{\lambda\|x\|}\right) = \lambda \|x\|a\left(\frac{x}{\|x\|}\right) = \lambda \psi(x).
\]
For $\lambda < 0$, it suffices to do $\lambda = -1$ by what we have already done, so first observe that
\[
\psi(-x) = \|-x\|a\left(\frac{-x}{\|x\|}\right) = \|x\| a\left(\frac{-x}{\|x\|}\right).
\]
Now, if $x \in \Ball(E)$, we have
\[
0 = a(0) = a\left(\frac{1}{2}(-x) + \frac{1}{2}x\right) = \frac{1}{2}a(-x) + \frac{1}{2}a(x),
\]
which, by multiplying by 2, gives $a(-x) = -a(x)$. So
\[
\psi(-x) = \|x\|a\left(\frac{-x}{\|x\|}\right) = -\|x\|a\left(\frac{x}{\|x\|}\right) = -\psi(x).
\]
We prove that $\psi$ preserves addition as follows. Let $x,y \in E$. If $x=y=0$, then $\psi(x+y) = \psi(x) + \psi(y)$ has already been proved, because $0= 0+0$. So assume that at least one of $x$ and $y$ is not $0$. Take $\lambda = \frac{1}{\max\{\|x\|,\|y\|\}}$, which is defined by the previous assumption. Then $\lambda x$ and $\lambda y$ are in $\Ball(E)$. We have
\begin{align*}
\frac{1}{2}\lambda \psi(x + y) &= \psi\left(\frac{1}{2}\lambda x + \frac{1}{2}\lambda y\right) = a\left(\frac{1}{2}\lambda x + \frac{1}{2}\lambda y\right) = \frac{1}{2} a(\lambda x) + \frac{1}{2} a(\lambda y) \\
 &= \frac{1}{2}\psi(\lambda x) + \frac{1}{2} \psi(\lambda y) = \frac{1}{2}\lambda (\psi(x) + \psi(y)).
\end{align*}
Therefore $\psi(x+y) = \psi(x) + \psi(y)$, because $\lambda \neq 0$. We have therefore proved that $\psi$ is linear, and it is bounded because it agrees with $a$ on $\Ball(E)$, upon which it is bounded. 
\end{proof}

\BNDualCharac*
\begin{proof}
The proof that $\langle \blank,\blank \rangle$ is bilinear and $\lev \colon \OU(E^*) \rightarrow \BN(E)^\odot$ is injective are similar to the proofs in Proposition \ref{OUDualCharacProp} so are omitted. 

Since $\rho \colon \BN(E)^* \rightarrow \EffS(B(\BN(E)))$ is an isomorphism \cite[Lemma 4.3]{furber2019}, in order to prove that $\lev$ is an isomorphism $\OU(E^*) \rightarrow \BN(E)^*$, it suffices to show that $\rho \circ \lev \colon \OU(E^*) \rightarrow \EffS(B(\BN(E)))$ is an isomorphism of order-unit spaces. We already know it is injective. By Proposition \ref{BNDefProp}, $B(\BN(E)) = \Ball(E) \times \{1\}$, and we will define $B = \Ball(E)$ as a short name. We will also not bother to write $\rho$, and simply refer to the map $\rho \circ \lev$ as $\lev$, since it is just the restriction to $B \times \{1\}$. 

We first show that for all $(\phi,\mu) \in OU(E^*)$, $\lev(\phi,\mu) \colon B \times \{1\} \rightarrow \R$ is bounded. Given $x \in B$, we have $|\phi(x)| \leq \|\phi\|$, so
\[
|\lev(\phi,\mu)(x,1)| = |\phi(x) + \mu| \leq |\phi(x)| + |\mu| \leq \|\phi\| + |\mu|,
\]
and as this does not depend on $x$, this proves it is bounded. The fact that $\lev(\phi,\mu)$ is affine follows from the bilinearity of the pairing $\langle \blank, \blank \rangle$, so we have shown $\lev(\phi,\mu) \in \EffS(B \times \{1\})$. The fact that $\lev$ is itself linear then follows from the other part of the bilinearity of the pairing. 

We can see that $\lev$ is unital, because for all $x \in B$:
\[
\lev(0,1)(x,1) = 0(x) + 1 = 1,
\]
and the constant $1$ function is the unit of $\EffS(B \times \{1\})$. 

To prove that $\lev$ is positive, let $(\phi,\mu) \in \OU(E^*)$, \emph{i.e.} $\|\phi\| \leq \mu$. Then for each $x \in B$, $|\phi(x)| \leq \|\phi\|$, so $-\phi(x) \leq \|\phi\| \leq \mu$. It follows that $\lev(\phi,\mu)(x,1) = \phi(x) + \mu \geq 0$, proving that $\lev(\phi,\mu)$ is positive. 

We also prove that $\lev$ is an order-embedding, \emph{i.e.} $\lev(\phi,\mu) \in \BN(E)^*_+$ implies $(\phi,\mu) \in \OU(E^*)_+$. So we have that for all $x \in B$, $\phi(x) + \mu \geq 0$. Since $0 \in B$, we have $0 \leq \phi(0) + \mu = \mu$. Since $-x \in B$ iff $x \in B$ as well, we have both $\phi(x) + \mu \geq 0$ and $\phi(-x) + \mu \geq 0$ for all $x \in B$ as well, so $|\phi(x)| \leq \mu$ for all $x \in B$. By the definition of the dual norm, this proves $\|\phi\| \leq \mu$, and therefore $(\phi,\mu) \in \OU(E^*)_+$. 

We already know that $\lev$ is injective, so to finish proving that it is an isomorphism of order-unit spaces, we only need to show that it is surjective, because the positivity and unitality of the inverse of $\lev$ follow from the fact that $\lev$ is an order embedding and unital. Let $a \in \EffS(B \times \{1\})$. We define $\psi \colon \Ball(E) \rightarrow \R$ by $\psi(x) = a(x,1) = a(0,1)$. As $\psi$ is affine and preserves $0$ it extends (Lemma \ref{BallExtensionLemma}) to a unique bounded linear map $\psi \colon E \rightarrow \R$. Then $(\psi,a(0,1)) \in \OU(E^*)$, and for all $x \in B$:
\[
\lev(\psi,a(0,1))(x,1) = \psi(x) + a(0,1) = a(x,1) - a(0,1) + a(0,1) = a(x,1),
\]
so $\lev(\psi,a(0,1)) = a$, proving the surjectivity. 
\end{proof}

\HypStoneCounter*
\begin{proof}
As before, we deal with part (i) first and do (ii) as a corollary. As $E$ is a Banach space that is not reflexive, $A = E^*$ is not reflexive, so there exists some $\phi \in \Ball(A^{**})$ such that $\phi \neq \ev(x)$ for all $x \in A$, \emph{i.e.} for all $x \in A$ there exists $y \in A^*$ such that $\phi(y) \neq y(x)$. We have $\OU(A) \cong \BN(E)^*$ by Proposition \ref{BNDualCharacProp}, so $\OU(A)$ is bounded directed-complete by Lemma \ref{DCLemma} (ii), and has sufficiently many normal states by Proposition \ref{PredualFnlsNormalProp}. But by Proposition \ref{OUDCNormProp}, $\NS(\OU(A)) = \Stat(\OU(A))$. Therefore $\EffS(\NS(\OU(A))) = \EffS(\Stat(\OU(A))) = \EffS(B(\OU(A)^*)) \cong \OU(A)^{**}$ (see \cite[Lemma 4.3]{furber2019} for the fact that $\EffS(B(F)) \cong F^*$ by the restriction map $\rho$, for any base-norm space $F$). So the evaluation mapping $\OU(A) \rightarrow \EffS(\NS(\OU(A)))$ can be considered to be $\rho \circ ev \colon \OU(A) \rightarrow \EffS(\Stat(\OU(A)))$, and since $\rho$ is an isomorphism, it suffices to prove that $\ev \colon \OU(A) \rightarrow \OU(A)^{**}$ is not surjective. 

Using the isomorphisms from Propositions \ref{OUDualCharacProp} and \ref{BNDualCharacProp}, it suffices to show that there is an element of $\BN(A^*)^*$ that is not $\ev(x,\lambda) \circ \lev$ for any $(x,\lambda) \in \OU(A)$. As $A$ is not reflexive, there exists $\phi \in A^{**}$ such that $\phi \neq \ev(x)$ for all $x \in A$, \emph{i.e.} for all $x \in A$ there exists $y \in A^*$ such that $\phi(y) \neq y(x)$. For such a $\phi \in A^{**}$, we can rescale it so that $\|\phi\| \leq \frac{1}{2}$, and we have that for all $(x,\lambda) \in \OU(A)$
\begin{align*}
(\ev(x,\lambda) \circ \lev)(y,0) &= \ev(x,\lambda)(\lev(y,0)) = \lev(y,0)(x,\lambda) = \langle (x,\lambda), (y,0) \rangle \\
 &= y(x) + 0 = y(x) \neq \phi(y) = \langle (y,0), (\phi,\frac{1}{2}) \rangle \\
 &= \lev\left(\phi,\frac{1}{2}\right)(y,0),
\end{align*}
and therefore $\ev(x,\lambda) \circ \lev \neq \lev(\phi,\frac{1}{2})$, so $\lev(\phi,\frac{1}{2})$ is the element of $\BN(A^*)^*$ that we are looking for. 

For part (ii), observe that as $\|\phi\| \leq \frac{1}{2}$, we have $(\phi,\frac{1}{2}) \in \OU(E^{**})_+$, and also $(0,1) - (\phi,\frac{1}{2}) = (-\phi,\frac{1}{2}) \in \OU(E^{**})$, so it is in the effect algebra of $\OU(E^{**})$. Since $B(\BN(E)) \cong \Ball(E) = X$, we have $\EffS(X) \cong \BN(E)^* \cong \OU(A)$, and part (ii) follows from the fullness results \cite[Propositions 3.1 and 3.2]{furber2019}. 
\end{proof}

\end{document}